\documentclass[12pt]{article}
\usepackage{amsmath, amssymb, graphicx, booktabs}
\usepackage[margin=1in]{geometry}
\usepackage{setspace}
\usepackage{xcolor}
\usepackage[normalem]{ulem}

\usepackage{float}       
\usepackage{booktabs}    
\usepackage{caption}     

\usepackage{float}
\usepackage{booktabs}

\usepackage{parskip}

\usepackage{amsmath,amssymb,bm}

\usepackage{graphicx}
\usepackage{booktabs}
\usepackage{multirow}
\usepackage{array}
\usepackage{tabularx}
\usepackage{caption}

\usepackage{listings}
\usepackage{xcolor}
\definecolor{codekw}{RGB}{0,64,128}
\definecolor{codecomment}{RGB}{0,128,64}
\definecolor{codestring}{RGB}{128,0,128}
\definecolor{codebg}{RGB}{245,245,245}
\lstdefinelanguage{Rcode}{
  keywords={library,list,function,TRUE,FALSE,NULL,NA,seq,c,set.seed,cat,head,
            print,which,min,max,log,exp,pmax,pmin,rexp,runif,cumsum,ifelse,
            data.frame,round,as.integer,saveRDS,readRDS,str,paste,setdiff},
  sensitive=true,
  morecomment=[l]{\#},
  morestring=[b]",
}
\makeatletter
\renewcommand\@cite[2]{(#1\if@tempswa , #2\fi)}
\makeatother
\newcommand{\citep}[1]{\cite{#1}}
\usepackage[colorlinks=true,linkcolor=red,citecolor=blue,urlcolor=blue]{hyperref}

\newcounter{proposition}
\newcounter{lemma}
\newcounter{remark}
\newenvironment{proposition}[1][]{
  \refstepcounter{proposition}
  \par\medskip\noindent\textbf{Proposition~\theproposition\if\relax\detokenize{#1}\relax\else\ (#1)\fi.}\enspace\ignorespaces
}{
  \par\medskip
}

\newenvironment{proof}{
  \par\noindent\textit{Proof.}\enspace
}{
  \hfill$\square$\par\medskip
}

\newcommand{\IA}{\mathrm{IA}}
\newcommand{\FA}{\mathrm{FA}}
\newcommand{\obs}{\mathrm{obs}}
\newcommand{\Bw}{B_W}
\newcommand{\Fw}{\bar F_W}

\DeclareMathOperator{\Gammadist}{Gamma}

\providecommand{\keywords}[1]{\noindent\textbf{Keywords: } #1}
\definecolor{waterblue}{RGB}{0,170,220}
\newcommand{\revadd}[1]{{\color{waterblue}#1}}

\usepackage{listings}
\usepackage{xcolor}
\lstdefinestyle{rstyle}{
  language=R,
  basicstyle=\ttfamily\small,
  columns=fullflexible,
  keepspaces=true,
  showstringspaces=false,
  frame=single,
  breaklines=true,
  tabsize=2,
  keywordstyle=\color{blue!70!black}\bfseries,
  commentstyle=\color{black!50},
  stringstyle=\color{green!40!black}
}

\usepackage{tikz}
\usetikzlibrary{arrows.meta,positioning,calc,decorations.pathreplacing}

\usepackage{booktabs}
\usepackage{multirow}
\usepackage{graphicx}

\title{Interim Monitoring as an Information–Time Alignment Problem: The WCR Framework for Time-to-Event Trials}
\author{Haitao Pan \\ Zhongheng Cai\\
Department of Biostatistics, St. Jude Children's Research Hospital}
\date{\today}

\begin{document}

\maketitle

\begin{abstract}

Interim monitoring in time-to-event (TTE) trials is fundamentally an information--timing problem. Existing interim paradigms typically rely on proxies for inferential information, such as event counts or sample size, rather than directly controlling follow-up maturity itself. Event-driven designs align interim analyses with event accumulation but may incur substantial and unpredictable calendar delays in sparse-event settings. Enrollment-driven designs provide operational predictability but may trigger analyses based on immature follow-up and unstable inferential maturity. These limitations become particularly consequential in rare-disease and long-horizon TTE trials, where the same interim calendar time may correspond to markedly different effective information horizons depending on accrual dynamics, censoring structure, and estimand definition.

We propose the Window-Cohort with Calibrated Follow-Up Requirement (WCR) framework, a maturity-based interim monitoring approach that jointly aligns information and calendar time by directly parameterizing follow-up maturity. The proposed design defines interim timing through two quantities: a locked cohort size $N_W$ and a calibrated post-lock follow-up requirement $X$, yielding the interim trigger
\[
T_{\IA} = a_{(N_W)}  + X,
\]
where $a_{(N_W)} $ denotes the enrollment time of the $N_W$-th patient. Unlike conventional paradigms, the proposed framework treats interim timing as an estimand-aligned statistical design object rather than as a consequence of event accumulation or enrollment progression alone.

The framework further distinguishes restricted and unrestricted follow-up regimes corresponding to landmark survival and proportional hazards--based estimands, respectively, thereby clarifying how estimands determine the effective information horizon at interim. Design parameters, including $N_W$, $X$, total sample size, and decision thresholds, are calibrated through constrained optimization procedures that enforce type I error and power requirements while balancing operational efficiency, calendar predictability, and patient exposure during decision-lag periods.

Simulation studies show that the proposed framework meets target type~I error and power under its calibration model while improving calendar interpretability relative to conventional event-driven and enrollment-driven approaches. The methodology is implemented in the open-source \textsf{R} package \texttt{WCRBayesDesign}, available on CRAN, facilitating reproducible calibration and transparent design evaluation. More broadly, the proposed framework reframes interim monitoring in TTE trials as an information--time alignment problem and highlights follow-up maturity as a central determinant of inferential information in modern trials with long-horizon endpoints.

\end{abstract}

\keywords{
Time-to-event trials;
Interim monitoring;
Information maturity;
Information--time alignment;
Estimand-aligned monitoring;
Effective information horizon;
Single-arm trials;
Rare diseases;
Event-driven design;
Enrollment-driven design;
Adaptive design;
Follow-up maturity
}

\section{Introduction}

Interim monitoring in time-to-event (TTE) clinical trials is fundamentally an information--timing problem. Decisions must be made early enough to remain operationally meaningful, yet sufficiently late to ensure that the available data contain adequate inferential information. Historically, these two objectives have been addressed indirectly through event accumulation or enrollment progression. Event-driven designs trigger analyses once a pre-specified number of events has occurred, whereas enrollment-driven designs trigger analyses once a target number of patients has been enrolled. Although these paradigms have become standard in modern clinical trial practice, both rely on proxies for inferential information rather than directly controlling the maturity structure of the underlying follow-up data.

This distinction becomes increasingly consequential in contemporary TTE trials characterized by long-horizon endpoints, sparse event processes, delayed treatment effects, slow accrual, or substantial administrative censoring. In such settings, event counts and sample size may become progressively disconnected from the actual maturity of inference. Consequently, interim analyses may no longer align with clinically actionable decision-making. Under some circumstances, event-driven monitoring may delay interim analyses until most patients have already been enrolled, thereby limiting the practical utility of futility monitoring. Conversely, enrollment-driven monitoring may produce analyses dominated by immature follow-up and heavy administrative censoring, yielding unstable inferential maturity across realizations. These limitations are not merely operational inconveniences. Rather, they reflect a structural mismatch between conventional interim monitoring paradigms and the information structure of modern TTE trials.

The limitations of this disconnect become especially pronounced in rare pediatric oncology and immunotherapy settings. Many contemporary pediatric solid-tumor studies evaluate long-horizon endpoints such as 2-year event-free survival (EFS), progression-free survival (PFS), duration of response (DOR), or milestone survival probabilities, while simultaneously operating under severe constraints in accrual and event availability. In such settings, event-driven monitoring may become operationally fragile because interim decisions occur only after prolonged calendar delays and after substantial patient exposure has already occurred. Enrollment-driven monitoring, while operationally predictable, may instead generate analyses with highly variable follow-up maturity depending on realized accrual dynamics. The resulting disconnect between operational timing and inferential maturity becomes increasingly problematic as TTE endpoints become longer, rarer, and more administratively censored.

More fundamentally, existing interim monitoring paradigms implicitly assume that event counts or enrollment counts adequately represent inferential information. However, in TTE settings, inferential information is driven primarily by the maturity structure of follow-up rather than by sample size or event accumulation alone. The same enrollment milestone may correspond to dramatically different information maturity depending on accrual dynamics, censoring patterns, and estimand definition. Similarly, the same event threshold may be reached at vastly different calendar times depending on the underlying event process. Existing paradigms therefore control either timing or information indirectly, but not both simultaneously.

This issue becomes even more important when the estimand itself depends on the effective follow-up horizon. Landmark survival estimands naturally induce horizon-restricted inference, whereas proportional hazards--based estimands use unrestricted follow-up extending to the full available observation window. Consequently, the same interim calendar time may correspond to fundamentally different effective information sets under different estimands. In this sense, interim timing is not purely an operational construct but an estimand-dependent inferential mechanism. These observations suggest that interim timing should not merely be inherited from event accumulation or enrollment progression. Rather, interim timing should itself be treated as a statistical design object that jointly governs information maturity, operational feasibility, and patient exposure.

Several strands of literature are relevant to this problem, although each addresses only part of the underlying information--time tension. Classical event-driven survival monitoring methods, including group sequential log-rank procedures and information-fraction frameworks, align interim analyses with event accumulation and thereby maintain strong inferential calibration under proportional hazards assumptions \revadd{\citep{obrien1979,lan1983}}. These approaches have formed the foundation of modern survival trial monitoring. However, because interim timing remains anchored to stochastic event realization, substantial calendar-time uncertainty may arise in sparse-event or long-horizon settings.

A second strand of literature focuses on Bayesian and adaptive monitoring procedures for TTE endpoints. Bayesian survival monitoring approaches and adaptive phase II survival designs have improved inferential flexibility under delayed outcomes, sparse events, and incomplete follow-up \revadd{\citep{wu2020twostage,wu2021bayesian,zhou2020bop2tte}}. Related methods incorporate posterior predictive probabilities, adaptive futility monitoring, or likelihood-based updating procedures to support decision-making under partial information. Nevertheless, in most existing approaches, interim timing remains fundamentally tied to event realization or enrollment progression. As a result, the underlying disconnect between timing and follow-up maturity remains unresolved.

A third strand of literature addresses delayed outcomes and pending responses in early-phase oncology studies. Methods such as time-to-event CRM procedures, TOP-type designs, and partial-information monitoring approaches explicitly recognize that outcomes may remain unobserved at interim analysis and attempt to incorporate partially observed follow-up into inference \revadd{\citep{cheung2000,lin2020top}}. These methods improve efficiency under delayed endpoint settings and reduce the need to suspend accrual while awaiting complete outcomes. However, follow-up maturity itself is typically treated as a secondary consequence of enrollment and event processes rather than as a primary design parameter governing interim timing.

A fourth strand of literature recognizes operational concerns associated with prolonged delays in event accumulation and proposes calendar-based safeguards or reporting backstops. Such approaches acknowledge the importance of operational feasibility, data monitoring, and timely reporting in modern TTE trials \revadd{\citep{fda2006dmc,ich2019e9r1}}. However, these procedures generally address reporting delays rather than the deeper inferential structure linking information maturity to interim timing.

Although these strands of literature emerged from different motivations, they share a common feature: existing approaches typically control proxies for inferential information rather than directly parameterizing follow-up maturity itself. Consequently, interim timing remains either operationally unpredictable, inferentially unstable, or estimand-disconnected in long-horizon TTE settings. This persistent limitation motivates the need for interim monitoring frameworks that directly control follow-up maturity and explicitly align information with calendar time.

To address this gap, we propose the Window-Cohort with Calibrated Follow-Up Requirement (WCR) framework for interim monitoring in single-arm TTE trials. The proposed framework introduces follow-up maturity as an explicit design parameter through two jointly specified quantities: a locked cohort size $N_W$ and a minimum follow-up requirement $X$. Interim analysis is triggered at
\[
T_{\IA} = a_{(N_W)}  + X,
\]
where $a_{(N_W)} $ denotes the enrollment time of the $N_W$-th patient. Under this construction, the analysis cohort is first fixed and then followed for a calibrated waiting period before interim evaluation occurs. Accrual may continue during this waiting window, but patients enrolled after cohort lock do not contribute to the interim decision.

Conceptually, the proposed framework represents a third interim monitoring paradigm distinct from both event-driven and enrollment-driven designs. Unlike event-driven approaches, WCR does not wait for stochastic event accumulation to determine when interim decisions may occur. Unlike enrollment-driven approaches, WCR does not assume that enrollment alone determines inferential maturity. Instead, WCR directly parameterizes follow-up maturity itself and thereby jointly constrains both timing and information.

The proposed framework also clarifies that effective information maturity is inherently estimand-dependent. Under landmark survival estimands, interim analyses become naturally restricted to the clinically meaningful horizon $\tau$, whereas proportional hazards--based estimands induce unrestricted follow-up usage extending to the entire available observation window. This distinction demonstrates that effective information maturity depends jointly on the interim timing mechanism and the estimand structure itself.

An additional feature of the WCR framework is the explicit characterization of ethical burden associated with interim timing mechanisms. Under sparse-event settings, prolonged waiting periods prior to interim decisions may expose additional patients to potentially ineffective therapy before futility can be declared. In conventional monitoring paradigms, this burden is often treated as incidental or unavoidable. In contrast, the WCR framework makes this burden explicit, measurable, and designable through quantities such as waiting-window enrollment burden and cumulative follow-up exposure during the decision-lag period. This allows trade-offs among inferential maturity, operational efficiency, and patient exposure to become transparent components of design calibration.

The current work is motivated in part by ongoing rare pediatric oncology studies such as RMS2021 (NCT06023641), a single-arm phase II rhabdomyosarcoma trial evaluating a long-horizon EFS endpoint under slow accrual and sparse event accumulation. In such settings, conventional event-driven monitoring may produce interim analyses only after many years of follow-up and after most patients have already been enrolled, whereas enrollment-driven approaches may yield analyses dominated by immature follow-up. Although RMS2021 provides a motivating example, the issues addressed here arise broadly across contemporary TTE studies involving rare diseases, immunotherapy, cell therapy, and long-term survival endpoints.

The proposed framework contributes both conceptually and methodologically. The main contributions are as follows. First, the paper reframes interim timing as an estimand-aligned maturity-control problem rather than as a rule inherited from event or enrollment counts. Second, it introduces the WCR trigger $(N_W,X)$, which separates cohort locking from the calibrated follow-up requirement. Third, it develops estimand-specific monitoring rules under restricted landmark follow-up and unrestricted HR-based follow-up. Fourth, it defines ethical burden metrics that quantify roll-on enrollment and follow-up exposure during the waiting window. Fifth, it provides simulation-based calibration and an open-source implementation in \texttt{WCRBayesDesign}.

Although Bayesian posterior monitoring provides the inferential engine in the current implementation, the conceptual contribution of the framework is not inherently Bayesian. Rather, the central contribution lies in the explicit parameterization of follow-up maturity and the joint alignment of information and calendar time. Consequently, the proposed framework is compatible with alternative inferential paradigms, including likelihood-based procedures, frequentist monitoring rules, hybrid Bayesian--frequentist approaches, and decision-theoretic adaptive monitoring strategies.

More broadly, the current work suggests that interim timing in TTE trials may need to be reinterpreted as an estimand-aligned statistical design object rather than merely a logistical consequence of event accumulation or enrollment progression. As modern clinical trials increasingly incorporate long-horizon endpoints, adaptive monitoring, sparse-event populations, and estimand-aware inference, the relationship between information and calendar time becomes progressively more central to statistical design itself. In this context, maturity-based monitoring frameworks may provide a useful direction for future development of adaptive TTE trial methodology.

The remainder of the paper is organized as follows. Section 2 introduces the conceptual and operational structure of the WCR framework. Section 3 develops the statistical inference and calibration procedures under both landmark and proportional hazards estimands. Section 4 presents implementation details and illustrative examples motivated by pediatric oncology applications. Section 5 evaluates operating characteristics through simulation studies and compares the proposed approach with conventional event-driven and enrollment-driven monitoring strategies. Section 6 concludes with discussion of theoretical implications, ethical considerations, and future directions for maturity-based interim monitoring in TTE trials.

\section{The WCR Design Framework}

\subsection{Conceptual Framework}

Interim monitoring in time-to-event (TTE) trials requires balancing two competing objectives: decisions should be made based on sufficiently mature outcome data, yet they must also occur within a predictable and operationally feasible time frame. As discussed in Section~1, existing design paradigms address these objectives only partially. Event-driven designs align interim analyses with statistical information through event accumulation, but provide limited control over when decisions occur. Enrollment-driven designs offer calendar predictability, but do not ensure that the available data are sufficiently informative at the time of analysis.

This limitation reflects a more fundamental issue: existing approaches control \emph{proxies} for information, such as event counts or sample size, rather than directly controlling the maturity of follow-up, which is the primary driver of information in TTE settings. As a result, interim timing is either unpredictable (under event-driven rules) or insufficiently informative (under enrollment-driven rules), especially in rare pediatric trials with slow accrual and long follow-up horizons.

We propose a conceptual shift in interim monitoring by introducing follow-up as an explicit design dimension. The key idea is to anchor interim analysis not to event accumulation or enrollment milestones alone, but to a \emph{pre-defined cohort with a calibrated follow-up requirement}. Specifically, the design is defined by two quantities: a cohort size $N_W$ and a minimum follow-up duration $X$ required for each patient in that cohort. Once the $N_W$-th patient is enrolled, the cohort is locked, and interim analysis is conducted only after all patients in the cohort have accrued at least $X$ units of follow-up.

This construction leads to a fixed-delay interim trigger that directly controls information maturity. By enforcing a minimum follow-up requirement, the design guarantees a lower bound on the informativeness of the interim data. At the same time, because the trigger depends only on enrollment up to $N_W$ and a fixed delay $X$, the timing of the interim analysis becomes predictable up to accrual variability prior to cohort lock. In this sense, the proposed approach jointly constrains both the timing and the informativeness of interim decisions.

\medskip

\noindent
\textbf{Conceptual comparison with existing designs.}
Figure~\ref{fig:timing_mechanisms} illustrates the distinction between three timing mechanisms. Event-driven designs tie interim timing to stochastic event accumulation, achieving information alignment at the cost of calendar uncertainty. Enrollment-driven designs fix calendar timing through enrollment milestones, but do not control follow-up maturity and therefore yield variable information at interim. In contrast, the proposed window-cohort design anchors interim timing to a fixed cohort and a calibrated follow-up requirement, thereby controlling both dimensions simultaneously. This figure highlights that event-driven and enrollment-driven designs each control only one dimension, whereas the proposed WCR design explicitly controls both.

\begin{figure}[H]
\centering
\begin{tikzpicture}

\node[font=\scriptsize\bfseries] at (2.5,5.5) {Event-Driven};
\node[font=\scriptsize\bfseries] at (9.5,5.5) {Enrollment-Driven};

\draw[thick] (0.5,4.6) -- (4.5,4.6);
\foreach \x in {1.0,1.6,2.2,2.8,3.4,4.0} {
    \filldraw[black] (\x,4.6) circle (0.04);
}
\foreach \x in {1.4,2.6,3.8} {
    \draw[thick] (\x,4.45) -- (\x,4.75);
}
\node[font=\scriptsize] at (2.5,4.95) {Events accumulate};
\draw[thick,->] (4.0,4.2) -- (4.0,3.5);
\node[draw,thick,align=center,font=\scriptsize,minimum width=3cm,minimum height=0.8cm] 
at (2.5,3.0) {IA when event\\threshold is reached};
\node[align=center,font=\scriptsize] at (2.5,2.0) {Controls information\\Timing unpredictable};

\draw[thick] (7.0,4.6) -- (12.0,4.6);
\foreach \x in {7.4,7.9,8.4,8.9,9.4,9.9,10.4,10.9,11.4} {
    \filldraw[black] (\x,4.6) circle (0.04);
}

\draw[thick] (9.15,4.45) -- (9.15,4.75);

\node[font=\scriptsize] at (9.25,4.95) {$N_W$ enrolled};

\draw[thick,->] (9.15,4.2) -- (9.15,3.5);

\node[draw,thick,align=center,font=\scriptsize,
minimum width=3cm,minimum height=0.8cm] 
at (9.25,3.0) {IA when enrollment\\milestone is reached};

\node[align=center,font=\scriptsize] 
at (9.25,2.0) {Controls timing\\Information unpredictable};



\node[font=\scriptsize\bfseries] at (6.5,1.5) {Window-Cohort with Follow-Up Requirement (WCR)};

\draw[thick] (3.0,0.5) -- (9.0,0.5);

\foreach \x in {3.6,4.2,4.8,5.4,6.0} {
    \filldraw[black] (\x,0.5) circle (0.04);
}

\foreach \x in {6.7,7.3,7.9} {
    \filldraw[gray] (\x,0.5) circle (0.04);
}

\foreach \x in {8.7} {
    \filldraw[gray!60] (\x,0.5) circle (0.04);
}

\draw[thick] (6.0,0.25) -- (6.0,0.75);
\node[font=\scriptsize,align=center] at (5.2,0.95) {Lock cohort at\\$N_W$-th patient};

\draw[thick] (8.6,0.25) -- (8.6,0.75);
\node[font=\scriptsize] at (8.6,0.95) {IA};

\draw[thick,<->] (6.0,0.15) -- (8.6,0.15);
\node[font=\scriptsize] at (7.3,-0.05) {Follow-up $X$};

\node[font=\scriptsize,align=center] at (7.3,0.75) {Accrual continues};

\draw[thick,->] (8.6,0.0) -- (8.6,-0.3);

\node[draw,thick,align=center,font=\scriptsize,minimum width=4.2cm,minimum height=0.8cm] 
at (6.0,-0.9) {IA after the locked cohort\\has at least $X$ follow-up};

\node[align=center,font=\scriptsize] at (6.0,-1.9) {Controls both timing and information directly};
\end{tikzpicture}
\caption{
Conceptual comparison of three interim monitoring mechanisms in time-to-event trials. 
Top left: event-driven designs trigger interim analysis when a target number of events is observed, aligning decisions with statistical information but leading to unpredictable calendar timing. 
Top right: enrollment-driven designs trigger interim analysis after a fixed number of patients are accrued, ensuring predictable timing but yielding variable information depending on follow-up maturity. 
Bottom: the proposed window-cohort with follow-up requirement (WCR) design anchors interim analysis to a fixed cohort of size $N_W$ and a minimum follow-up duration $X$, thereby directly controlling both the timing and the informativeness of interim decisions.
}
\label{fig:timing_mechanisms}
\end{figure}

A complementary perspective is provided in Figure~\ref{fig:conceptual_anchor_space}, which positions these designs in a two-dimensional space defined by calendar predictability and information maturity. Existing approaches occupy opposing corners of this space, while the proposed design targets the upper-right region by jointly controlling both attributes.

\begin{figure}[H]
\centering
\begin{tikzpicture}[
    font=\small,
    >=Latex,
    axis/.style={thick},
    point/.style={circle, draw, thick, minimum size=5mm, inner sep=0pt},
    labelstyle/.style={align=center}
]

\draw[axis] (0,0) -- (10,0);
\draw[axis] (0,0) -- (0,8);

\node at (10.5,0) {\textbf{Calendar Predictability}};
\node[rotate=90] at (0,8.6) {\textbf{Information Maturity Control}};

\node[point, fill=blue!20] (event) at (2,7) {};
\node[labelstyle, above right=3mm of event] 
{\textbf{Event-driven}\\
Information-anchored\\
Calendar unpredictable};

\node[point, fill=orange!25] (enroll) at (8,2) {};
\node[labelstyle, below right=3mm of enroll] 
{\textbf{Enrollment-driven}\\
Calendar-anchored\\
Information unstable};

\node[point, fill=green!25] (wcr) at (8,7) {};
\node[labelstyle, above right=3mm of wcr] 
{\textbf{WCR}\\
Cohort + Calibrated Follow-up\\
Calendar predictable\\
Information controlled};

\node[align=center] at (5,4)
{\textit{Structural dilemma in rare pediatric TTE trials:}\\
Event-driven $\Rightarrow$ operational uncertainty\\
Enrollment-driven $\Rightarrow$ statistical immaturity};

\end{tikzpicture}
\caption{Conceptual positioning of interim timing anchors. Event-driven designs align decisions with information but sacrifice calendar predictability. Enrollment-driven designs fix calendar timing but detach decisions from information maturity. The Window-Cohort with Calibrated Follow-Up Requirement (WCR) aligns both dimensions by anchoring interim timing to a locked cohort with a calibrated minimum follow-up constraint.}
\label{fig:conceptual_anchor_space}
\end{figure}

\medskip

\noindent
\textbf{Operational interpretation.}
The proposed framework can be viewed as a “fix-then-wait” strategy: first fix the analysis cohort at a pre-specified size, then allow follow-up to accrue in a controlled manner before conducting the interim analysis. Accrual may continue during this waiting window, but patients enrolled after cohort lock are excluded from interim analysis and carried forward to the final analysis. This separation ensures that the interim information set is fixed and amenable to calibration, while preserving overall enrollment efficiency.

\medskip

\noindent
\textbf{Design implications.}
By treating follow-up as a primary design parameter, the proposed framework transforms interim timing from an emergent property into a controllable design feature. This shift enables ex ante calibration of operating characteristics, including type I error, power, and interim stopping behavior, under explicitly defined information conditions. It also makes transparent the ethical trade-offs associated with delayed decision-making, such as the exposure of additional patients during the waiting window, which can be quantified and incorporated into design optimization.

The following sections formalize this framework, introduce notation and data structures, and develop procedures for calibration, evaluation, and implementation.

\subsection{Operational Definition of the WCR Design}

We now formalize the proposed window-cohort with calibrated follow-up requirement (WCR) design. The key components of the design are a pre-specified cohort size $N_W$ and a minimum follow-up requirement $X$. In this section, $X$ denotes the follow-up requirement induced by the design; its numerical value is selected by the calibration procedure in Section~\ref{sec:calibration-impl}.

Let $a_i$ denote the enrollment time of patient $i$, and let $a_{(N_W)}$ be the enrollment time of the $N_W$-th patient. We assume $N \geq N_W$ and resolve any enrollment-time ties by the preassigned patient index used in the trial database. The cohort of the first $N_W$ enrolled patients is referred to as the \emph{analysis cohort}. Once this cohort is formed, it is fixed for the purpose of interim analysis.

For design calibration and operating-characteristic evaluation, enrollment times are generated from a prespecified accrual model, taken in the RMS2021 example to be a homogeneous Poisson process with rate $r$. Event times $T_i$ are assumed to be independent and identically distributed from the working survival distribution under the scenario being evaluated and independent of the enrollment process. Censoring at the interim analysis is administrative, induced by the calendar analysis time; no independent dropout or informative censoring is included in the operating-characteristic calculations unless explicitly introduced in a sensitivity scenario.

The interim analysis (IA) is scheduled at
\[
T_{\mathrm{IA}} = a_{(N_W)} + X,
\]
where $X \ge 0$ is a pre-specified follow-up duration. That is, the interim analysis is conducted only after all patients in the analysis cohort have had at least $X$ units of potential follow-up time. The boundary case $X=0$ is interpreted as an enrollment-driven limiting case included for comparison and optimization rather than as a positive waiting-window design.

Accrual is allowed to continue during the interval $(a_{(N_W)}, T_{\mathrm{IA}})$. Patients enrolled after $a_{(N_W)}$ are not included in the interim analysis and are carried forward to the final analysis. We refer to these patients as \emph{roll-on patients}. This separation ensures that the interim data set is defined solely by the fixed analysis cohort and is therefore amenable to pre-trial calibration.

Let $T_i$ denote the true event time for patient $i$, and let $C_i^{\IA} = T_{\mathrm{IA}} - a_i$ be the administrative censoring time at interim. The observed time at interim is given by
\[
T_i^{\mathrm{obs,IA}} = \min(T_i, C_i^{\IA}),
\]
with event indicator
\[
\Delta_i^{\IA} = \mathbf{1}\{T_i \le C_i^{\IA}\}.
\]
The interim analysis is based on the data $\{(T_i^{\mathrm{obs,IA}}, \Delta_i^{\IA}): i = 1, \dots, N_W\}$. Zero observed events at interim are allowed; in that case the posterior update is driven by the accumulated follow-up time and the prior.

This construction yields a fixed-delay interim trigger that directly enforces a lower bound on follow-up maturity. In particular, for all patients in the analysis cohort, $C_i \ge X$ holds by design, ensuring that each contributes at least $X$ units of potential follow-up. As a result, the information available at interim is explicitly linked to the design parameters $(N_W, X)$, rather than being determined indirectly through event accumulation or accrual timing.

The pair $(N_W, X)$ therefore serves as the primary design parameters governing both the timing and the informativeness of the interim analysis. Their selection will be addressed through calibration procedures described in subsequent sections. This parameterization distinguishes the proposed design from existing approaches in which interim timing is determined indirectly through proxies for information.

\medskip

\noindent
\textbf{Operational interpretation.}
The WCR design can be interpreted as a three-step procedure. First, an analysis cohort of size $N_W$ is defined by fixing the first $N_W$ enrolled patients. Second, accrual continues while follow-up accumulates for the analysis cohort. Third, the interim analysis is triggered once all patients in the cohort have accrued at least $X$ units of follow-up. This ``fix-then-wait'' mechanism ensures that the interim data set is both well-defined and sufficiently mature for inference.

Figure~\ref{fig:wcr_operation} provides a schematic illustration of this process.

\begin{figure}[htbp]
\centering
\resizebox{0.8\textwidth}{!}{
\begin{tikzpicture}


\node[font=\scriptsize\bfseries] at (6.5,1.5) {Window-Cohort with Follow-Up Requirement (WCR)};

\draw[thick] (3.0,0.5) -- (9.0,0.5);

\foreach \x in {3.6,4.2,4.8,5.4,6.0} {
    \filldraw[black] (\x,0.5) circle (0.04);
}

\foreach \x in {6.7,7.3,7.9} {
    \filldraw[gray] (\x,0.5) circle (0.04);
}

\foreach \x in {8.7} {
    \filldraw[gray!60] (\x,0.5) circle (0.04);
}

\draw[thick] (6.0,0.25) -- (6.0,0.75);
\node[font=\scriptsize,align=center] at (5.2,0.95) {Lock cohort at\\$N_W$-th patient};

\draw[thick] (8.6,0.25) -- (8.6,0.75);
\node[font=\scriptsize] at (8.6,0.95) {IA};

\draw[thick,<->] (6.0,0.15) -- (8.6,0.15);
\node[font=\scriptsize] at (7.3,-0.05) {Follow-up $X$};

\node[font=\scriptsize,align=center] at (7.3,0.75) {Accrual continues};

\draw[thick,->] (8.6,0.0) -- (8.6,-0.3);

\node[draw,thick,align=center,font=\scriptsize,
minimum width=4.2cm,minimum height=0.8cm] 
at (6.0,-0.9) {IA after the locked cohort\\has at least $X$ follow-up};

\node[align=center,font=\scriptsize] at (6.0,-1.9)
{Controls both timing and information maturity};

\end{tikzpicture}
}
\caption{
Schematic illustration of the window-cohort with follow-up requirement (WCR) design. 
The first $N_W$ patients define the analysis cohort, which is locked upon enrollment of the $N_W$-th patient. 
Accrual continues while follow-up accumulates for the cohort. 
The interim analysis is conducted only after all cohort patients have accrued at least $X$ units of follow-up. 
Patients enrolled after cohort lock are excluded from the interim analysis and carried forward to the final analysis.
}
\label{fig:wcr_operation}
\end{figure}

\subsection{Notation and Data Structure}

We now introduce notation for the data structure induced by the WCR design at the interim analysis.

Let $a_i$ denote the enrollment time of patient $i$, for $i = 1, \dots, N$, where $N$ is the total number of patients enrolled over the course of the trial. Let $T_i$ denote the true event time, defined with respect to a specified time-to-event endpoint (e.g., event-free survival or progression-free survival).

As defined in Section~2.2, the interim analysis is conducted at time
\[
T_{\mathrm{IA}} = a_{(N_W)} + X,
\]
where $a_{(N_W)}$ is the enrollment time of the $N_W$-th patient and $X$ is the minimum follow-up requirement for the analysis cohort.

\medskip

\noindent
\textbf{Analysis cohort and censoring structure.}
The interim analysis is based on the analysis cohort consisting of the first $N_W$ enrolled patients. For each patient $i = 1, \dots, N_W$, the administrative censoring time at interim is given by
\[
C_i^{\mathrm{IA}} = T_{\mathrm{IA}} - a_i.
\]
By construction, $C_i^{\mathrm{IA}} \ge X$ for all $i = 1, \dots, N_W$, ensuring a minimum level of follow-up for each patient in the analysis cohort.

The observed data at interim are therefore
\[
T_i^{\mathrm{obs,IA}} = \min(T_i, C_i^{\mathrm{IA}}), 
\qquad
\Delta_i^{\mathrm{IA}} = \mathbf{1}\{T_i \le C_i^{\mathrm{IA}}\},
\]
for $i = 1, \dots, N_W$. The interim analysis is conducted using the dataset
\[
\mathcal{D}_{\mathrm{IA}} = \{(T_i^{\mathrm{obs,IA}}, \Delta_i^{\mathrm{IA}}): i = 1, \dots, N_W\}.
\]

\medskip

\noindent
\textbf{Roll-on patients.}
Patients enrolled after time $a_{(N_W)}$ are referred to as roll-on patients. These patients are not included in $\mathcal{D}_{\mathrm{IA}}$ and therefore do not influence interim decisions. However, they are retained for inclusion in the final analysis. This separation ensures that the interim data structure is fixed and determined solely by the analysis cohort.

\medskip

\noindent
\textbf{Final analysis.}
Let $T_{\mathrm{FA}}$ denote the time of the final analysis, which may be defined according to a pre-specified stopping rule or study completion criteria. The final analysis uses all enrolled patients and their available follow-up, yielding the dataset
\[
\mathcal{D}_{\mathrm{FA}} = \{(T_i^{\mathrm{obs,FA}}, \Delta_i^{\mathrm{FA}}): i = 1, \dots, N\},
\]
where $T_i^{\mathrm{obs,FA}} = \min(T_i, C_i^{\mathrm{FA}})$ and $C_i^{\mathrm{FA}} = T_{\mathrm{FA}} - a_i$.

\medskip

\noindent
\textbf{Information structure under WCR.}
The key feature of the WCR design is that the interim data $\mathcal{D}_{\mathrm{IA}}$ is structured by the design parameters $(N_W, X)$ and the realized accrual times up to $a_{(N_W)}$. In particular, the design enforces a lower bound on follow-up duration for each patient in the analysis cohort, while allowing the number and timing of observed events to remain stochastic. Thus, WCR controls the minimum follow-up structure, whereas realized inferential information remains conditional on the event-time distribution and censoring process.

This distinguishes the WCR data structure from event-driven designs, where the analysis is triggered by the realization of event times, and from enrollment-driven designs, where the analysis is triggered by sample size without explicit control of follow-up maturity. Under WCR, the minimum follow-up maturity of the interim dataset is directly governed by the design parameters through the induced censoring mechanism. This representation highlights that, under WCR, the administrative follow-up structure is controlled through the design of administrative censoring rather than through event realization or accrual timing.

\subsection{Illustrative Example}

We illustrate the operation of the WCR design through a simple example that reflects the settings described in Section~1.

Consider a single-arm phase II trial with a time-to-event endpoint and slow accrual, as in the RMS2021 setting described earlier. Suppose the WCR design is specified with cohort size $N_W = 20$ and follow-up requirement $X = 6$ months.

Patients are enrolled sequentially over time. Once the 20th patient is enrolled at time $a_{(20)}$, the analysis cohort is defined and fixed. Accrual continues beyond this point, and additional patients are enrolled during the subsequent waiting period.

The interim analysis is conducted at time
\[
T_{\mathrm{IA}} = a_{(20)} + 6 \text{ months},
\]
at which point all patients in the analysis cohort have had at least 6 months of potential follow-up. The interim dataset therefore consists of 20 patients, each with a minimum follow-up duration of 6 months, subject to administrative censoring at $T_{\mathrm{IA}}$.

Patients enrolled after $a_{(20)}$ are not included in the interim analysis and do not influence the interim decision. However, they continue to accrue follow-up and are incorporated into the final analysis. In this way, the interim decision is based on a fixed and well-defined information set, while overall trial enrollment is not interrupted. For example, under a Poisson accrual rate $r$, the expected number of roll-on patients during this waiting window is approximately $rX$; at $r=5/12$ patients per month and $X=6$ months, this corresponds to about 2.5 patients.

\medskip

\noindent
\textbf{Comparison with alternative timing rules.}
To highlight the role of the WCR design, consider how the same trial might proceed under alternative interim monitoring strategies. Under an event-driven design, the interim analysis would be triggered by the occurrence of a target number of events among enrolled patients. In settings with low event rates, this may occur only after substantial calendar time has elapsed, potentially when a large fraction of the trial has already been accrued.

Under an enrollment-driven design, the interim analysis would be conducted immediately upon enrollment of the 20th patient. In this case, the follow-up duration for each patient would depend on the realized accrual pattern. If accrual is relatively rapid, many patients may have only limited follow-up at interim, resulting in a dataset with substantial administrative censoring and limited information.

In contrast, under the WCR design, the interim analysis occurs at a predictable delay following cohort completion and is based on data with controlled follow-up maturity. This ensures that each patient in the analysis cohort contributes a minimum amount of information, while preserving a transparent and interpretable relationship between design parameters and interim data.

\medskip

\noindent
\textbf{Interpretation.}
This example illustrates how the WCR design separates the determination of the analysis cohort from the accumulation of follow-up. By fixing the cohort first and then allowing follow-up to accrue, the design directly controls the information content of the interim dataset. At the same time, because the interim timing is defined relative to the cohort lock time, it remains predictable up to variability in accrual prior to $a_{(N_W)}$.

The resulting interim dataset is therefore both well-defined and design-driven, providing a foundation for calibration and evaluation procedures developed in subsequent sections. 

\subsection{Comparison with Existing Timing Mechanisms}

We compare the proposed WCR design with commonly used interim monitoring strategies in time-to-event trials, focusing on how each approach determines the timing of interim analysis and the corresponding level of statistical information.

\medskip

\noindent
\textbf{A unifying perspective.}
Interim monitoring mechanisms can be characterized by two key attributes: (i) how the timing of the interim analysis is determined, and (ii) how the informativeness of the interim data is controlled. In existing designs, these two objectives are typically addressed separately, leading to inherent trade-offs between calendar predictability and information maturity.

\medskip

\noindent
\textbf{Event-driven designs.}
In event-driven designs, interim analyses are triggered by the accumulation of a pre-specified number of events. This approach aligns decision timing with statistical information, as event counts serve as a proxy for the effective sample size under time-to-event models. However, because event occurrence is stochastic and depends on both underlying hazard rates and follow-up duration, the timing of interim analysis is inherently unpredictable. In settings with low event rates or long-horizon endpoints, this may result in substantial delays, with interim decisions occurring only after most patients have already been enrolled.

\medskip

\noindent
\textbf{Enrollment-driven designs.}
In enrollment-driven designs, interim analyses are conducted once a fixed number of patients have been accrued. This ensures calendar predictability and administrative simplicity. However, sample size alone does not determine the amount of information available in time-to-event settings. The maturity of follow-up and the number of observed events depend on the realized accrual pattern, leading to variability in the informativeness of the interim dataset. Consequently, the same design may yield substantially different operating characteristics across realizations.

\medskip

\noindent
\textbf{Hybrid rules.}
In practice, hybrid rules are sometimes used to balance these competing considerations, for example by triggering interim analysis after a fixed number of patients or a fixed amount of time, whichever comes first. While such rules may mitigate extreme delays or immaturity, they remain ad hoc and do not directly control the underlying driver of information. As a result, their operating characteristics are difficult to characterize and justify in a principled manner.

\medskip

\noindent
\textbf{The WCR design.}
The proposed WCR design differs fundamentally from these approaches by directly controlling follow-up maturity through the design parameters $(N_W, X)$. By fixing the analysis cohort and imposing a minimum follow-up requirement, WCR ensures that each patient contributes at least $X$ units of potential follow-up at interim. Patients who experience an event before $X$ still have $X$ units of potential follow-up by design, but their observed event-free follow-up is shorter than $X$. This establishes a direct link between design parameters and the minimum maturity structure of the interim dataset. At the same time, because the interim timing is defined relative to the cohort lock time, it remains predictable up to variability in accrual prior to $a_{(N_W)}$.

\medskip

\noindent
\textbf{Summary comparison.}
Table~\ref{tab:timing_comparison} summarizes the key distinctions among these approaches.

\begin{table}[H]
\centering
\resizebox{0.9\textwidth}{!}{
\begin{tabular}{lccc}
\hline
\textbf{Design} & \textbf{Interim Trigger} & \textbf{Timing Control} & \textbf{Information Control} \\
\hline
Event-driven 
& Event count 
& Unpredictable 
& Indirect (via events) \\

Enrollment-driven 
& Sample size 
& Predictable 
& Indirect (via follow-up variability) \\

Hybrid rules 
& Ad hoc combination 
& Partially predictable 
& Not explicitly controlled \\

WCR (proposed) 
& Cohort + follow-up $(N_W, X)$ 
& Predictable (conditional on accrual) 
& Direct (via follow-up requirement) \\
\hline
\end{tabular}
}
\caption{
Comparison of interim monitoring mechanisms in time-to-event trials. Existing approaches control either timing or information indirectly through proxies, whereas the WCR design directly controls follow-up maturity and thereby jointly constrains both timing and information.
}
\label{tab:timing_comparison}
\end{table}

Taken together, these comparisons highlight that the key distinction of the WCR design is not merely the use of a fixed delay, but the explicit parameterization of follow-up as a design variable. This allows interim timing and information maturity to be jointly controlled within a unified framework.

\begin{proposition}[Estimand-dependent effective data horizon at interim]
\label{prop:estimand-horizon-1}
Under the WCR trigger, the interim analysis occurs at the deterministic calendar time
\[
T_{\mathrm{IA}} = a_{(N_W)} + X.
\]
Let $t_{\mathrm{cal}} = T_{\mathrm{IA}}$ denote the interim calendar time.

Then the effective data horizon at interim is determined jointly by the estimand and the available follow-up:

\begin{enumerate}
    \item[(i)] For the landmark estimand $\theta = S(\tau)$, the effective data horizon is truncated at $\tau$, yielding a $\tau$-restricted analysis.
    
    \item[(ii)] For the PH-based hazard-ratio estimand, the effective data horizon extends to the full available follow-up up to $t_{\mathrm{cal}}$, without additional truncation.
\end{enumerate}

Thus, while the WCR trigger fixes the calendar timing of interim analysis, the estimand determines the effective information horizon.
\end{proposition}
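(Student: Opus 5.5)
The argument is definitional: I will formalize ``effective data horizon'' as the largest follow-up time (measured from each patient's enrollment) at which the observed interim data $\mathcal{D}_{\mathrm{IA}}$ actually enter the estimator or posterior for the estimand, and then read this off each estimand.

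\textbf{Step 1: the interim calendar time.} First I will show that, conditional on the accrual process, $T_{\mathrm{IA}}$ is deterministic. By the construction in Section~2.2, once $a_{(N_W)}$ is realized, $T_{\mathrm{IA}} = a_{(N_W)} + X$ depends only on that realized time and the prespecified $X$. It does not depend on the event times $T_i$. Consequently, the censoring times $C_i^{\IA} = t_{\mathrm{cal}} - a_i \ge X$ are fixed functions of the accrual data. The data $\mathcal{D}_{\mathrm{IA}}$ are therefore a right-censored sample with known, noninformative administrative censoring.

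\textbf{Step 2: the landmark case, part (i).} Since $\theta = S(\tau)$ is a functional of the survival curve on $[0,\tau]$ only, any estimator or posterior for it uses each patient's data only through the truncated quantities
\[
\tilde T_i = \min(T_i^{\mathrm{obs,IA}}, \tau), \qquad \tilde\Delta_i = \Delta_i^{\IA}\mathbf{1}\{T_i^{\mathrm{obs,IA}} \le \tau\}.
\]
To show this, I will note that the Kaplan--Meier estimator evaluated at $\tau$ is a product over event times $\le \tau$. Its risk sets at those times are unaffected by how long anyone is followed beyond $\tau$. For a piecewise or parametric working model on $[0,\tau]$, the likelihood contribution relevant to $\theta$ likewise factors through $(\tilde T_i, \tilde\Delta_i)$. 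Hence the effective horizon is $\min(\tau, C_i^{\IA})$ per patient, which is bounded by $\tau$. This gives a $\tau$-restricted analysis. If $X \ge \tau$, every patient reaches the full horizon $\tau$.

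\textbf{Step 3: the PH case, part (ii).} Under PH, the hazard-ratio estimand corresponds to a constant multiplicative hazard on the whole time axis. The likelihood (partial or exponential/Bayesian working model) is therefore
\[
\prod_i h(T_i^{\mathrm{obs,IA}})^{\Delta_i^{\IA}} \exp\{-H(T_i^{\mathrm{obs,IA}})\}.
\]
This uses each $T_i^{\mathrm{obs,IA}}$ up to the full censoring time $C_i^{\IA} = t_{\mathrm{cal}} - a_i$, with no truncation. In particular, events and exposure after $\tau$ change the likelihood. The effective horizon is thus the full available follow-up, ending at calendar time $t_{\mathrm{cal}}$.

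\textbf{Main obstacle.} The difficulty is not computational but definitional. ``Effective data horizon'' is not formally defined earlier in the paper, so the statement could be read as tautological. I will address this by stating the definition explicitly in terms of the sufficient data entering the likelihood or estimator. I will also note that in part (i) the data beyond $\tau$ are ancillary to $\theta$ only under a model that does not link the curve before and after $\tau$. With a fully parametric model spanning both (e.g.\ a single exponential), one would instead have to say that the landmark analysis is \emph{defined} to use $\tau$-truncated data.
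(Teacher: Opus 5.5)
Your argument is correct in substance, but it takes a different route from the paper. The paper's proof reads the result directly off its two likelihood constructions. The landmark analysis is built on $u_i=\min\{T_i,\tau,(t_{\mathrm{cal}}-a_i)_+\}$, and the HR analysis on $u_i=\min\{T_i,(t_{\mathrm{cal}}-a_i)_+\}$. The Gamma posteriors depend on the data only through $(d,\widetilde\Omega)$ and $(D,\Omega)$, which are built from these $u_i$. So follow-up beyond $\tau$ cannot enter the landmark analysis, and all follow-up up to $t_{\mathrm{cal}}$ enters the HR analysis. The paper makes no claim about other estimators, and your Step~1 is extra (harmless) material.

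You instead try to derive the truncation from the structure of the estimand, via Kaplan--Meier at $\tau$ and likelihood factorization. Your route shows the truncation is not ad hoc. Suppose the working hazard is $\varphi h_0(t)/H_0(\tau)$ on $[0,\tau]$, and the hazard after $\tau$ is left unrestricted with an independent prior. Then the likelihood factors, and the $\varphi$-part is exactly $\varphi^{d}\exp(-\varphi\widetilde\Omega)$ with $\tau$-truncated $u_i$. That is the paper's landmark posterior. PH, by contrast, ties the post-$\tau$ hazard to $\delta$, so post-$\tau$ data must enter. The paper's route buys brevity and an exact match to the statistics actually computed.

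You do need to tighten Step~2. As written, it claims that because $\theta=S(\tau)$ depends only on the curve on $[0,\tau]$, any estimator or posterior uses only the truncated data. That is false, and the paper's own part (ii) is a counterexample. The HR analysis gives a posterior for $S(\tau)=\theta_L^{\delta}$, which is also a functional of the curve on $[0,\tau]$, yet it uses untruncated data. Being a functional on $[0,\tau]$ is therefore not what separates (i) from (ii). What separates them is whether the model, or the definition of the analysis, links the post-$\tau$ hazard to the target parameter.

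Moreover, the paper literally presents the landmark likelihood as a full-axis exponential working likelihood applied to data declared $\tau$-restricted. That is precisely the case your final paragraph flags. There your definitional fallback is the operative argument, and it coincides with the paper's proof. So either lead with that read-off, or state the variation-independence hypothesis explicitly in place of ``any estimator.''

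Two minor points. Under the factorized model, post-$\tau$ data are not ``ancillary''; they inform the tail nuisance hazard, and what you mean is that the posterior of $\theta$ does not depend on them. Also, a partial likelihood plays no role in this single-arm setting with known $S_0$.
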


\subsection{Ethical and Operational Considerations}
\label{sec:ethical}

A key feature of the WCR design is that interim timing is determined by a fixed cohort and a pre-specified follow-up requirement. While this structure ensures controlled information maturity, it also introduces a potential ethical trade-off associated with continued enrollment during the waiting window $(a_{(N_W)}, T_{\mathrm{IA}}]$. A patient enrolled exactly at $T_{\mathrm{IA}}$ is counted as enrolled before the decision but contributes zero interim-period follow-up.

\medskip

\noindent
\textbf{Decision-lag enrollment burden.}
Under the WCR design, patients may continue to be enrolled after the analysis cohort is fixed but before the interim analysis is conducted. If the treatment is ultimately ineffective, these additional patients are exposed to a potentially futile therapy without contributing to the interim decision. We refer to this phenomenon as the \emph{decision-lag enrollment burden}.

Formally, let
\[
\mathcal{I}_W = \{ i : a_i \in (a_{(N_W)},\; a_{(N_W)} + X] \},
\qquad
B_W^{\obs} = |\mathcal{I}_W|
\]
denote the realized number of patients enrolled during the waiting window. This quantity captures the extent of additional enrollment occurring before the interim decision is made.

\medskip

\noindent
\textbf{Follow-up burden and patient exposure.}
While $B_W$ quantifies the number of additional patients, it does not capture the extent of their exposure. To characterize the total follow-up burden, define
\[
F_W^{\obs} = \sum_{i \in \mathcal{I}_W}
\min\bigl(T_i,\; T_{\mathrm{IA}} - a_i,\; \tau\bigr),
\]
for the landmark estimand, where $T_i$ denotes the potential event time. For a
PH-based estimand using unrestricted follow-up, the $\tau$ cap is omitted. This
quantity represents the cumulative observed follow-up time accrued by patients
enrolled during the waiting window. A corresponding per-patient measure is
given by
\[
\bar{F}_W^{\obs} = \frac{F_W^{\obs}}{B_W^{\obs}},
\]
when $B_W^{\obs} > 0$; by convention, $F_W^{\obs}=\bar F_W^{\obs}=0$ when $B_W^{\obs}=0$.

These quantities provide complementary realized summaries of decision-lag burden: $B_W^{\obs}$ reflects the number of additional patients exposed, while $F_W^{\obs}$ and $\bar{F}_W^{\obs}$ capture the magnitude of exposure in terms of follow-up time. In the simulation tables we report the corresponding expectations, $B_W=E(B_W^{\obs})$, $F_W=E(F_W^{\obs})$, and $\bar F_W=E(\bar F_W^{\obs})$, under the stated operating scenario. These quantities are descriptive in the primary optimization but can also be imposed as design constraints, for example $E_{H_0}(B_W^{\obs})\le b_{\max}$ or $E_{H_0}(F_W^{\obs})\le f_{\max}$, when a protocol requires an explicit cap on decision-lag exposure.

\medskip

\noindent
\textbf{Design interpretation.}
Unlike in event-driven designs, where interim timing depends on stochastic event accumulation, or enrollment-driven designs, where interim timing is tied to sample size without explicit control of follow-up, the WCR design makes the waiting window explicit and pre-specified. As a result, the associated ethical burden is not only transparent but also quantifiable in terms of $(N_W, X)$ and the accrual process.

\medskip

\noindent
\textbf{Comparison across design paradigms.}
The nature of ethical burden differs across interim monitoring strategies. In event-driven designs, enrollment may continue until a sufficient number of events is observed, leading to potentially large and unpredictable numbers of additional patients treated before a decision is made. In enrollment-driven designs, interim analysis is triggered immediately upon reaching the target sample size, minimizing decision lag but potentially relying on immature follow-up data; thus, limited lag burden does not imply negligible ethical risk. In contrast, the WCR design introduces a fixed waiting window that balances these considerations: it ensures a minimum level of follow-up maturity at interim while allowing the associated enrollment burden to be explicitly characterized.

\begin{table}[H]
\centering
\resizebox{0.9\textwidth}{!}{
\begin{tabular}{lccc}
\hline
\textbf{Design} & \textbf{Interim Trigger} & \textbf{Decision Lag} & \textbf{Decision-Lag Burden} \\
\hline
Event-driven 
& Event count 
& Unpredictable 
& Variable, not directly controlled \\

Enrollment-driven 
& Sample size 
& Minimal 
& Limited lag burden; may use immature data \\

WCR (proposed) 
& $a_{(N_W)} + X$ 
& Pre-specified 
& Explicit, quantifiable, and controllable \\
\hline
\end{tabular}
}
\caption{
Comparison of ethical burden across interim monitoring strategies. 
The WCR design makes the decision-lag burden explicit and designable through $(N_W, X)$.
}
\end{table}

\medskip

\noindent
\textbf{Implications for design calibration.}
Because $B_W$, $F_W$, and $\bar{F}_W$ are determined by the design parameters and the assumed accrual process, they can be evaluated under different scenarios and incorporated into design selection. For example, constraints on expected enrollment burden or follow-up exposure under the null hypothesis may be imposed alongside traditional operating characteristics such as type I error and power. This enables a principled trade-off between statistical performance and ethical considerations.

\medskip

\noindent
\textbf{Interpretation.}
These considerations highlight that ethical burden is inherent in all interim monitoring strategies, but manifests in different forms. The distinguishing feature of the WCR design is that it renders the waiting-window burden explicit, measurable, and amenable to design-level control. This provides a transparent framework for balancing decision timeliness, information maturity, and patient exposure in time-to-event trials.

\section{Statistical Inference and Design Calibration}
\label{sec:inference}

\paragraph{Estimands under the WCR trigger.}
Under the same WCR timing rule, different estimands induce different data
usages at interim.

\begin{itemize}
  \item[(i)] \emph{Landmark (milestone) estimand at $\tau$}. In the single-arm
    setting, define
    \begin{equation}
      \theta = S(\tau).
    \end{equation}
    At interim, estimation of $\theta$ is based on follow-up truncated at
    $\tau$; that is, each patient's contribution is restricted to
    $\min(T_i, \tau)$. This yields a $\tau$-restricted analysis consistent
    with the primary time-horizon definition.

  \item[(ii)] \emph{PH-based hazard-ratio estimand relative to a reference
    curve}. Alternatively, assume a proportional hazards structure
    \begin{equation}
      S(t) = \{S_0(t)\}^{\delta},
    \end{equation}
    where $\delta$ is the parameter of interest relative to a reference
    survival curve $S_0(t)$. In this case, estimation at interim uses all
    available follow-up up to $T_{\IA}$ without truncation at $\tau$. Thus
    the analysis is unrestricted with respect to the follow-up horizon.
\end{itemize}

Although both estimands can be mapped to survival probability at $\tau$ for
interpretability, they correspond to distinct modeling assumptions and use
different effective data horizons at interim.

\begin{proposition}[Estimand determines effective data horizon under fixed
  interim timing]
  \label{prop:estimand-horizon-2}
  Under the WCR trigger, the interim analysis occurs at the deterministic
  calendar time
  \begin{equation*}
    T_{\IA} = a_{(N_W)} + X.
  \end{equation*}
  Let $t_{\mathrm{cal}} = T_{\IA}$ denote the interim calendar time.

  Then the effective data horizon at interim depends on the estimand:
  \begin{enumerate}
    \item[(i)] \textbf{Landmark estimand at $\tau$.} If the target parameter
      is $\theta = S(\tau)$, then each patient contributes follow-up
      restricted to $\min(T_i, \tau)$. The interim analysis is therefore
      $\tau$-restricted, regardless of whether $(t_{\mathrm{cal}} - a_i)$
      exceeds $\tau$.
    \item[(ii)] \textbf{PH-based hazard-ratio estimand.} If the target
      parameter $\delta$ is defined through
      \begin{equation*}
        S(t) = \{S_0(t)\}^{\delta},
      \end{equation*}
      then estimation at interim uses all available follow-up up to
      $t_{\mathrm{cal}}$. No truncation at $\tau$ is imposed, and the
      analysis is unrestricted with respect to the follow-up horizon.
  \end{enumerate}

  Thus, while the WCR trigger fixes the calendar timing of interim analysis,
  the estimand determines the effective information horizon and data usage
  at interim.
\end{proposition}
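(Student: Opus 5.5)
The argument is essentially a definitional unpacking, so I will make each part follow from the paper's constructions rather than from a limit theorem. First I will fix the calendar time. By the WCR trigger of Section~2.2, $T_{\IA}=a_{(N_W)}+X$ with $X$ prespecified. Conditional on the accrual realization up to cohort lock, $t_{\mathrm{cal}}$ is therefore a fixed number that does not depend on the event times. This is what the word ``deterministic'' should mean in the statement: the trigger is a functional of $\{a_i\}_{i\le N_W}$ and $X$ only, and is measurable with respect to the accrual $\sigma$-field. It follows that the administrative censoring times $C_i^{\IA}=t_{\mathrm{cal}}-a_i\ge X$ are fixed given accrual. Since the $T_i$ are independent of enrollment, the interim data $\mathcal{D}_{\IA}$ form a standard independently right-censored sample with known censoring times.

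For part (i), I will write the observed-data likelihood and show that it factors into a part informative about $\theta=S(\tau)$ and a part that is ancillary for it. The key step is to replace each observation $(T_i^{\mathrm{obs,IA}},\Delta_i^{\IA})$ by its $\tau$-truncated version, with observed time $\min(T_i,C_i^{\IA},\tau)$ and indicator $\mathbf{1}\{T_i\le \min(C_i^{\IA},\tau)\}$. The estimand $S(\tau)=\exp\{-\Lambda(\tau)\}$ depends only on the hazard on $[0,\tau]$. Under a piecewise (or nonparametric) specification in which the hazard on $[0,\tau]$ and on $(\tau,\infty)$ are variationally independent, the likelihood contributions from $(\tau,\infty)$ carry no information about $\theta$. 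Under independent priors, they also integrate out of the posterior for $\theta$. Hence only $\min(T_i,\tau)$ and the truncated censoring enter, regardless of whether $t_{\mathrm{cal}}-a_i>\tau$, which gives the $\tau$-restricted analysis. I expect this to be the main obstacle, because it requires stating precisely the working model under which the factorization holds. If a parametric model couples the hazard before and after $\tau$ (for example, a single exponential rate), post-$\tau$ data would be informative. So I will make the truncation part of the estimand's definition, as in the displayed item (i) preceding the proposition, and note explicitly that the restriction is a design choice implied by the estimand rather than a mathematical necessity under every model.

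For part (ii), under $S(t)=\{S_0(t)\}^{\delta}$ with $S_0$ known, the cumulative hazard is $\delta\Lambda_0(t)$. The likelihood of $\mathcal{D}_{\IA}$ is therefore
\[
\prod_{i=1}^{N_W}\{\delta\lambda_0(T_i^{\mathrm{obs,IA}})\}^{\Delta_i^{\IA}}\exp\{-\delta\Lambda_0(T_i^{\mathrm{obs,IA}})\}
\propto \delta^{D}\exp\Bigl\{-\delta\sum_{i=1}^{N_W}\Lambda_0(T_i^{\mathrm{obs,IA}})\Bigr\},
\]
where $D=\sum_i\Delta_i^{\IA}$. The sufficient statistics involve $\Lambda_0$ evaluated at the full observed times up to $C_i^{\IA}=t_{\mathrm{cal}}-a_i$. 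Every follow-up interval, including the portion beyond $\tau$, contributes to the score for $\delta$ whenever $\lambda_0>0$ there. No truncation is therefore induced, and the horizon is the full available follow-up.

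I will then combine the two parts. The calendar time is common to both estimands, but the sets of observations entering the likelihood (and the posterior) differ, which yields the concluding sentence of the proposition.
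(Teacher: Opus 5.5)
Your proposal is correct, but it takes a more substantive route than the paper. The paper's proof is a pointer to definitions. In Section~3.1 the landmark contribution is defined as $u_i=\min\{T_i,\tau,(t_{\mathrm{cal}}-a_i)_+\}$, so follow-up beyond $\tau$ cannot enter $d$ or $\widetilde{\Omega}$. In \eqref{eq:hr-obs} the HR contribution is $u_i=\min\{T_i,(t_{\mathrm{cal}}-a_i)_+\}$, so all follow-up enters $\Omega$. The proposition therefore holds by construction.

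You do three additional things:
\begin{enumerate}
\item You make ``deterministic'' precise as measurability of $T_{\IA}$ with respect to the accrual $\sigma$-field. This turns $\mathcal{D}_{\IA}$ into an independently right-censored sample.
\item You justify the $\tau$-restriction in (i) by a likelihood factorization. This requires a model in which the hazards on $[0,\tau]$ and $(\tau,\infty)$ are variationally independent and have independent priors.
\item You derive the kernel $\delta^{D}\exp\{-\delta\sum_i\Lambda_0(T_i^{\mathrm{obs,IA}})\}$ for (ii). This is exactly the paper's $\Gammadist(a_0+D,\,b_0+\Omega)$ update, and it shows that the sufficient statistic genuinely depends on post-$\tau$ follow-up.
\end{enumerate}

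What your route buys is an explanation rather than a tautology. It shows why truncation is principled for a model-free $S(\tau)$, and why truncating under PH would discard information about $\delta$.

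What it does not buy is a derivation for the paper's actual landmark working likelihood. There a single $\varphi$ scales $H_0(t)/H_0(\tau)$ over the whole time axis, which is exactly the coupled case you flag. Under that model, post-$\tau$ data would be informative about $\varphi$ if admitted. So in the paper's setting the restriction is imposed by the definition of $u_i$, as you correctly conclude. Your fallback for (i) therefore coincides with the paper's proof, and the factorization argument serves as motivation rather than as a needed step.
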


\begin{proof}
  For the landmark estimand, the observed contribution is
  $u_i=\min\{T_i,\tau,(t_{\mathrm{cal}}-a_i)_+\}$ as defined in
  Section~\ref{sec:milestone}; therefore follow-up beyond $\tau$ cannot enter
  either the event count or the log-survival sum. For the PH-based estimand,
  \eqref{eq:hr-obs} defines
  $u_i=\min\{T_i,(t_{\mathrm{cal}}-a_i)_+\}$, so all available follow-up up to
  the interim calendar time enters the transformed cumulative-hazard sum, with
  no $\tau$ truncation. The two likelihood constructions therefore use the same
  WCR calendar trigger but different estimand-induced data horizons.
\end{proof}

The two estimands differ in their reliance on structural assumptions: the
landmark estimand targets a time-specific survival probability and remains
valid without assuming a particular relationship between survival curves,
whereas the HR-based estimand provides a global comparison but relies on the
proportional hazards assumption to ensure interpretability and model coherence.

\subsection{Milestone survival estimand at the primary horizon \texorpdfstring{$\tau$}{tau}}
\label{sec:milestone}

\paragraph{Estimand.}
The primary estimand is the milestone survival probability at the
protocol-defined horizon $\tau$:
\begin{equation}
  \theta \;=\; S(\tau) \;=\; \Pr(T > \tau).
\end{equation}
Data entering the analysis are $\tau$-restricted: each patient contributes
\[
  u_i \;=\; \min\!\bigl\{T_i,\;\tau,\;(t_{\mathrm{cal}} - a_i)_+\bigr\},
  \qquad
  \delta_i \;=\; \mathbf{1}\{T_i \le u_i\},
\]
so that no information beyond the landmark horizon is used; $d = \sum_i \delta_i$
denotes the total observed event count. Here and below, $n$ denotes the number
of records in the analysis dataset: $n=N_W$ at interim and $n=N_T$ at the final
analysis.

\paragraph{Exponential working likelihood.}
We first describe an exponential working likelihood for the underlying event
time. Under this working model, the
likelihood for the hazard rate $\lambda > 0$ given the $\tau$-restricted
data is
\begin{equation}
  \label{eq:working-lik-lambda}
  L(\lambda) \;\propto\; \lambda^{d}
    \exp\!\bigl(-\lambda\cdot\mathrm{TTOT}\bigr),
  \qquad \mathrm{TTOT} = \sum_{i=1}^{n} u_i.
\end{equation}
The sufficient statistic $\mathrm{TTOT}$ (total observation time) aggregates all
individual follow-up contributions.

\paragraph{Reparametrisation to the log-survival parameter $\varphi$.}
Under the exponential working model, the landmark survival satisfies
$\theta = S(\tau) = e^{-\lambda\tau}$, so $\lambda = \varphi/\tau$ where
$\varphi = -\log\theta \in (0,\infty)$.  We call $\varphi$ the
\emph{log-survival parameter} at $\tau$.  Substituting $\lambda = \varphi/\tau$
into \eqref{eq:working-lik-lambda} gives
\begin{equation}
  \label{eq:working-lik-phi}
  L(\varphi) \;\propto\; \varphi^{d}
    \exp\!\Bigl(-\varphi\cdot\underbrace{\frac{\mathrm{TTOT}}{\tau}}%
    _{\widetilde{\Omega}}\Bigr),
\end{equation}
a Gamma kernel in $\varphi$.  Under the exponential working model,
$\varphi = \lambda\tau = H(\tau)$ equals the cumulative hazard at the
landmark; under proportional hazards relative to a null reference, the hazard
ratio equals $\varphi/\varphi_L$ where $\varphi_L = -\log\theta_L$.
However, the estimand itself is model-free: $\varphi$ is defined solely through the landmark
survival probability $\theta = e^{-\varphi} = S(\tau)$. The posterior update below should therefore be interpreted as a calibrated working-likelihood construction whose operating characteristics depend on the chosen reference curve $S_0$.

\paragraph{Prior and posterior.}
With conjugate prior $\varphi \sim \Gammadist(a_0, b_0)$ (shape--rate
parameterisation), the posterior is
\begin{equation}
  \label{eq:posterior-delta-landmark}
  (-\log\theta) \mid d,\,\widetilde{\Omega}
  \;\sim\; \Gammadist\!\bigl(a_0 + d,\; b_0 + \widetilde{\Omega}\bigr).
\end{equation}

Unless otherwise stated, the numerical examples and simulations use
$a_0=b_0=0.01$, matching the diffuse working prior reported in
Section~\ref{sec:setup}. This prior is placed on the log-survival parameter
$\varphi=-\log\theta$ for the landmark estimand; on the survival-probability scale it induces $E\{S(24)\}=0.955$, median approximately 1.00, central 95\% interval approximately $(0.009,1.000)$, and $\Pr\{S(24)<0.62\}=0.047$ for the RMS2021 null threshold. Thus the prior is not uniform on the survival scale, and prior sensitivity should be examined when few events are expected. Sensitivity to the working
baseline is evaluated separately in Section~\ref{sec:p2-pw-robustness}.

\paragraph{Posterior inference on $\theta$.}
Since $\theta = e^{-\varphi}$ is a decreasing function of $\varphi$, all
posterior quantities on $\theta$ follow from \eqref{eq:posterior-delta-landmark}.
Applying the Gamma moment generating function at $t = 1$ yields the
posterior mean in closed form:
\begin{equation}
  \label{eq:Etheta}
  \mathbb{E}[\theta \mid d,\widetilde{\Omega}]
  \;=\; \mathbb{E}\!\bigl[e^{-\varphi}\bigr]
  \;=\; \left(\frac{b_0 + \widetilde{\Omega}}{b_0 + \widetilde{\Omega} + 1}
        \right)^{\!a_0+d}.
\end{equation}
The key posterior probability entering both decision rules is
$\Pr(\theta < \theta_L \mid d,\widetilde{\Omega})$, the posterior mass below
the null threshold.  Since $\{\theta < \theta_L\} \Leftrightarrow
\{\varphi > -\log\theta_L\}$, this equals the upper tail of the Gamma posterior:
\begin{equation}
  \label{eq:Prtheta}
  \Pr(\theta < \theta_L \mid d,\widetilde{\Omega})
  \;=\; \Pr\!\bigl(\varphi > {-\log\theta_L} \mid d,\widetilde{\Omega}\bigr),
\end{equation}
evaluated from the Gamma survival function of \eqref{eq:posterior-delta-landmark}.

\paragraph{Generalisation to an arbitrary reference baseline.}
In practice, the reference baseline $S_0(\cdot)$ may be chosen to reflect
non-exponential hazard shapes (e.g.\ Weibull or piecewise-Weibull) informed
by historical data.  Following the working-likelihood constructions used in prior single-arm TTE monitoring designs \citep{wu2020twostage,wu2021bayesian}, the exponential working likelihood extends to an
arbitrary $S_0$ by replacing $\mathrm{TTOT}/\tau$ with the
\emph{normalised log-survival sum}
\begin{equation}
  \label{eq:psi-transform}
  \widetilde{\Omega} \;=\; \frac{\Omega}{H_0(\tau)}
               \;=\; \frac{1}{-\log\theta_L}\sum_{i=1}^{n} H_0(u_i),
  \qquad H_0(t) = -\log S_0(t),
\end{equation}
where the reference baseline is calibrated so that $H_0(\tau) = -\log\theta_L$.
Under an exponential $S_0$, one has $H_0(u_i) = \lambda_0 u_i$ and
$H_0(\tau) = \lambda_0\tau$, so $\widetilde{\Omega}$ reduces to
$\mathrm{TTOT}/\tau$, recovering \eqref{eq:working-lik-phi} exactly.  For a
general $S_0$, the unnormalised log-survival sum $\Omega = \sum_i H_0(u_i)$
plays the role of $\mathrm{TTOT}$, and normalising by $H_0(\tau)$ ensures
that the posterior rate parameter is $b_0 + \widetilde{\Omega}$ for any choice
of baseline.  The posterior form \eqref{eq:posterior-delta-landmark} is
unchanged; the reference model contributes to the analysis solely through
$\widetilde{\Omega}$.

\paragraph{Decision rules (landmark scale).}
Both decision rules operate on the same posterior tail probability
$\Pr(\theta < \theta_L \mid \text{data})$ from \eqref{eq:Prtheta},
evaluated at the null survival threshold $\theta_L$.

\emph{Interim analysis (IA).}
\begin{equation}
  \text{Stop for futility if }
  \Pr(\theta < \theta_L \mid \text{IA data}) \geq p_{\IA}.
\end{equation}
Otherwise, the trial continues to the final analysis.

\emph{Final analysis (FA).}
\begin{equation}
  \text{Declare efficacy if }
  \Pr(\theta < \theta_L \mid \text{full data}) \leq p_{\FA}.
\end{equation}
Both tail probabilities are evaluated via \eqref{eq:posterior-delta-landmark}
and \eqref{eq:Etheta}--\eqref{eq:Prtheta}.

\paragraph{Example 1: Exponential baseline.}
Let $S_0(t) = \exp(-\lambda_0 t)$ with $\lambda_0 = (-\log\theta_L)/\tau$.
Then $H_0(u_i) = \lambda_0 u_i = (-\log\theta_L)\,u_i/\tau$, giving
\[
  \widetilde{\Omega} \;=\; \frac{\sum_i u_i}{\tau}
               \;=\; \frac{\mathrm{TTOT}}{\tau},
  \qquad \mathrm{TTOT} = \sum_i u_i,
\]
and the posterior \eqref{eq:posterior-delta-landmark} specialises to
\[
  (-\log\theta) \mid d,\,\mathrm{TTOT}
  \;\sim\; \Gammadist\!\Bigl(a_0 + d,\; b_0 + \tfrac{\mathrm{TTOT}}{\tau}\Bigr).
\]
This recovers the standard exponential working-likelihood result used in prior Bayesian single-arm TTE monitoring designs \citep{wu2020twostage,wu2021bayesian}.

\paragraph{Example 2: Weibull baseline.}
Let $S_0(t) = \exp\{-(\lambda_0 t)^k\}$ with known shape $k > 0$ and
$\lambda_0 = (-\log\theta_L)^{1/k}/\tau$.
Then $H_0(u_i) = (\lambda_0 u_i)^k = (-\log\theta_L)\,(u_i/\tau)^k$, so
\[
  \widetilde{\Omega} \;=\; \sum_i \!\Bigl(\frac{u_i}{\tau}\Bigr)^{\!k},
\]
and the posterior is
\[
  (-\log\theta) \mid d,\,\widetilde{\Omega}
  \;\sim\; \Gammadist\!\Bigl(a_0+d,\;
    b_0 + \textstyle\sum_i (u_i/\tau)^k\Bigr).
\]
Setting $k = 1$ recovers Example~1.

\paragraph{Example 3: Piecewise Weibull baseline (two segments).}
Let $t_{\mathrm{break}}$ denote a clinically meaningful changepoint (e.g.\
end of induction therapy).  The piecewise Weibull survival function is
\[
  S_0(t) \;=\;
  \begin{cases}
    \exp\!\bigl\{-(\lambda_1 t)^{k_1}\bigr\},
    & t \le t_{\mathrm{break}}, \\[4pt]
    \exp\!\Bigl\{-(\lambda_1 t_{\mathrm{break}})^{k_1}
      -\bigl[(\lambda_2 t)^{k_2} - (\lambda_2 t_{\mathrm{break}})^{k_2}\bigr]
    \Bigr\},
    & t > t_{\mathrm{break}},
  \end{cases}
\]
so the cumulative hazard is
\[
  H_0(t) \;=\;
  \begin{cases}
    (\lambda_1 t)^{k_1},
    & t \le t_{\mathrm{break}}, \\[4pt]
    (\lambda_1 t_{\mathrm{break}})^{k_1}
      + (\lambda_2 t)^{k_2} - (\lambda_2 t_{\mathrm{break}})^{k_2},
    & t > t_{\mathrm{break}},
  \end{cases}
\]
where shapes $k_1, k_2$ and rates $\lambda_1, \lambda_2$ are calibrated to
historical data subject to $H_0(\tau) = -\log\theta_L$.  The normalised
log-survival sum is
\[
  \widetilde{\Omega} \;=\; \frac{1}{-\log\theta_L}
  \Biggl[
    \sum_{i:\,u_i \le t_{\mathrm{break}}}(\lambda_1 u_i)^{k_1}
    +\sum_{i:\,u_i > t_{\mathrm{break}}}
      \Bigl((\lambda_1 t_{\mathrm{break}})^{k_1}
            + (\lambda_2 u_i)^{k_2}
            - (\lambda_2 t_{\mathrm{break}})^{k_2}\Bigr)
  \Biggr].
\]
The posterior form \eqref{eq:posterior-delta-landmark} is unchanged; only
$\widetilde{\Omega}$ differs.  This specification accommodates hazard profiles
common in paediatric oncology, where the hazard shape changes at a
clinically meaningful time point such as the end of induction therapy.

\subsection{PH-based hazard-ratio estimand relative to a reference curve}
\label{sec:phestimand}

\paragraph{Estimand.}
The estimand is the hazard-ratio parameter $\delta > 0$ defined through
\begin{equation}
  S(t) = \{S_0(t)\}^{\delta}, \qquad t \geq 0,
\end{equation}
where $S_0(\cdot)$ is a known reference survival curve.  Equivalently,
$\delta = H(t)/H_0(t)$ for all $t \ge 0$; $\delta < 1$ indicates longer
survival than the reference.  For interpretability at the clinically
relevant horizon $\tau$, the landmark survival
$\theta = S(\tau) = \{S_0(\tau)\}^{\delta} = \theta_L^{\,\delta}$
maps $\delta$ to a clinically meaningful scale.

\paragraph{Effective data horizon at interim.}
Under the PH estimand, inference uses \emph{unrestricted} follow-up; each
patient contributes
\begin{equation}
  \label{eq:hr-obs}
  u_i \;=\; \min\!\bigl\{T_i,\;(t_{\mathrm{cal}} - a_i)_+\bigr\},
  \qquad
  \Delta_i \;=\; \mathbf{1}\{T_i \le u_i\},
\end{equation}
with no truncation at $\tau$.

\paragraph{Log-survival sum and posterior.}
The log-survival sum for the HR estimand is
\begin{equation}
  \label{eq:psi-transform-hr}
  \Omega \;=\; \sum_{i=1}^{n} H_0(u_i)
           \;=\; \sum_{i=1}^{n}\bigl(-\log S_0(u_i)\bigr),
\end{equation}
computed with unrestricted $u_i$ from \eqref{eq:hr-obs} rather than the
$\tau$-capped version of \eqref{eq:psi-transform}.  With conjugate prior
$\delta \sim \Gammadist(a_0, b_0)$, the posterior is
\begin{equation}
  \label{eq:posterior-delta-hr}
  \delta \mid D,\,\Omega \;\sim\; \Gammadist\!\bigl(a_0 + D,\; b_0 + \Omega\bigr),
\end{equation}
where $D = \sum_i \Delta_i$.  The posterior mean and variance are
\begin{equation}
  \label{eq:post-moments-hr}
  \mathbb{E}[\delta \mid D,\Omega]
  \;=\; \frac{a_0 + D}{b_0 + \Omega},
  \qquad
  \operatorname{Var}[\delta \mid D,\Omega]
  \;=\; \frac{a_0 + D}{(b_0 + \Omega)^2}.
\end{equation}
Unless otherwise stated, the HR analyses use the same numerical
hyperparameters $a_0=b_0=0.01$, now placed on the hazard-ratio parameter
$\delta$. Because $\varphi$ and $\delta$ live on different scales, this shared
shape--rate pair should be interpreted as a common weak-prior convention for
implementation rather than as an assertion of identical prior information on
the two estimands.

\paragraph{Decision rules (hazard-ratio scale).}
Let $\delta_{\mathrm{goal}}$ denote the null hazard-ratio level. In the
RMS2021 application, $\delta_{\mathrm{goal}}=1$, corresponding to the null
survival curve satisfying $S_0(\tau)=\theta_L$. Let
$p_{\IA}^{(\delta)}$ and $p_{\FA}^{(\delta)}$ denote the posterior
probability cutoffs.  Both decision rules operate on the same posterior
tail probability $\Pr(\delta \geq \delta_{\mathrm{goal}} \mid \text{data})$,
the posterior mass at or beyond the null level. Since smaller $\delta$ implies
longer survival than the reference, small posterior mass above
$\delta_{\mathrm{goal}}$ is evidence against the null.

\emph{Interim analysis (IA).}
\begin{equation}
  \text{Stop for futility if }
    \Pr\!\bigl(\delta \geq \delta_{\mathrm{goal}} \mid \text{IA data}\bigr)
    \geq p_{\IA}^{(\delta)}.
\end{equation}
Otherwise, the trial continues to the final analysis.

\emph{Final analysis (FA).}
\begin{equation}
  \text{Declare efficacy if }
    \Pr\!\bigl(\delta \geq \delta_{\mathrm{goal}} \mid \text{full data}\bigr)
    \leq p_{\FA}^{(\delta)}.
\end{equation}
Both tail probabilities are evaluated from the Gamma posterior
\eqref{eq:posterior-delta-hr}.

\paragraph{Remark.}
The horizon $\tau$ enters the HR estimand only as an interpretability anchor
via $\theta = \theta_L^{\,\delta}$.  Inference itself uses the full observed
follow-up, and the posterior \eqref{eq:posterior-delta-hr} has the same
conjugate Gamma structure as \eqref{eq:posterior-delta-landmark}, with
$\Omega$ computed without the $\tau$-cap.  Once $S_0(\cdot)$ is specified, no additional parametric form is needed for posterior computation; however, operating characteristics depend on the adequacy of that reference curve over the observed follow-up range.

\paragraph{Interpretation.}
The WCR framework separates three components of trial design: the timing of
data collection, the statistical model for inference, and the decision
criteria. While the interim timing and cohort definition are shared across
estimands, the inference and decision rules remain estimand-specific:
the landmark estimand uses a generalised working likelihood with
$\tau$-restricted follow-up (Section~\ref{sec:milestone}), whereas the
hazard-ratio estimand uses the same likelihood structure with unrestricted
follow-up. This separation allows a common temporal structure
to be combined with different inferential targets within a unified design
framework.

\subsection{Design calibration under the WCR framework}
\label{sec:calibration-impl}

Under the WCR framework, the interim timing parameters $(N_W, X)$ and
decision thresholds $(p_{\IA}, p_{\FA})$ jointly determine the statistical,
operational, and ethical performance of the design. Calibration is therefore
formulated as a structured search over the design space, guided by operating
characteristic constraints and optimization objectives.

The calibration requires explicit null and alternative scenarios. For the
landmark estimand, we write $H_0:\theta=\theta_L$ and
$H_1:\theta=\theta_{\mathrm{alt}}$. For the HR estimand, we write
$H_0:\delta=\delta_{\mathrm{goal}}$ and
$H_1:\delta=\delta_{\mathrm{alt}}$, where
$\delta_{\mathrm{goal}}=1$ and
$\delta_{\mathrm{alt}}=\log(\theta_{\mathrm{alt}})/\log\{\theta_L\}$ when the
reference curve is calibrated so that $S_0(\tau)=\theta_L$. In the RMS2021
application below, $\theta_L=0.62$, $\theta_{\mathrm{alt}}=0.80$,
$\alpha_{\mathrm{target}}=0.05$, $1-\beta_{\mathrm{target}}=0.80$,
$p_{H0}=0.40$, and $p_{H1}=0.10$. Type I error and power are evaluated by Monte
Carlo simulation under the working baseline used for calibration, with
additional Weibull-shape sensitivity, accrual-rate sensitivity, and reference-curve robustness analyses reported in Sections~\ref{sec:1A}, \ref{sec:1B}, and~\ref{sec:p2-pw-robustness}, respectively.

\paragraph{Overview.}
We adopt a three-step calibration strategy:
\begin{center}
  (i) screen feasible interim designs \quad
  (ii) optimize final-stage parameters \quad
  (iii) validate operating characteristics.
\end{center}
This procedure separates the control of interim behavior from the
optimization of final performance, while preserving coherence with the WCR
timing structure.

\paragraph{Step 1: Interim feasibility screening.}
We first identify interim configurations that provide adequate early
stopping behavior under the null hypothesis. Specifically, we search over
\begin{equation}
  (N_W, X) \in \mathcal{W} \times \mathcal{X},
\end{equation}
where $\mathcal{W}$ and $\mathcal{X}$ are prespecified candidate sets for
cohort size and follow-up requirement. For each $(N_W, X)$, the interim
Monte Carlo simulation under the working baseline produces the sampling
distribution of the posterior tail
$\Pr(\theta < \theta_L \mid \text{IA data})$ (landmark estimand) or
$\Pr(\delta \ge \delta_{\mathrm{goal}} \mid \text{IA data})$ (HR estimand)
under both $H_0$ and $H_1$.

The interim decision threshold $p_{\IA}$ is calibrated via a
one-dimensional grid search so that the futility stopping rule achieves a
desired early stopping probability under $H_0$. We retain configurations
satisfying
\begin{equation}
  \mathcal{D}_{\IA} = \Bigl\{(N_W, X, p_{\IA}) : p_{\text{stop},H_0} \geq p_{H0}\Bigr\},
\end{equation}
where $p_{\text{stop},H_0}$ denotes the probability of stopping at interim
under $H_0$.

Optionally, an additional constraint
\begin{equation}
  p_{\text{stop},H_1} \leq p_{H1}
\end{equation}
is imposed to limit false futility under the alternative hypothesis.

\paragraph{Step 2: Final-stage optimization.}
Given the feasible interim set $\mathcal{D}_{\IA}$, we optimize final-stage
parameters $(N_T, p_{\FA})$ for each candidate
$(N_W, X, p_{\IA}) \in \mathcal{D}_{\IA}$. The final decision rule uses
$\theta$ (landmark) or $\delta$ (HR) as defined in the corresponding
estimand subsection, with posterior tails evaluated through the Gamma
conjugate posteriors of \eqref{eq:posterior-delta-landmark} or
\eqref{eq:posterior-delta-hr}.

We retain designs satisfying the global operating characteristic
constraints
\begin{equation}
  \alpha(\mathbf{d}) \leq \alpha_{\text{target}}, \qquad
  \text{Power}(\mathbf{d}) \geq 1 - \beta_{\text{target}},
\end{equation}
and select an optimal design $\mathbf{d}^*$ according to a prespecified
objective function, such as
\begin{align}
  & \min \mathbb{E}_{H_0}[N_T] && \text{(expected sample size)}, \\
  & \min \max N_T && \text{(minimax design)},
\end{align}
or
\begin{equation}
  \min \mathbb{E}[T_{\FA}] \quad \text{(calendar-time efficiency)}.
\end{equation}

\paragraph{Step 3: Validation.}
The selected design
$\mathbf{d}^* = (N_T, N_W, X, p_{\IA}, p_{\FA})$ is evaluated using
independent simulation to obtain stable estimates of its operating
characteristics, including type I error, power, early stopping
probabilities, expected sample size, and calendar-time metrics.

If the design fails to meet the target criteria, the search procedure is
iterated by refining $\mathcal{W}$, $\mathcal{X}$, or the optimization
objectives.

\paragraph{Remarks.}
This calibration framework treats interim timing $(N_W, X)$ and decision
thresholds $(p_{\IA}, p_{\FA})$ as joint design parameters. By explicitly
separating interim feasibility, final-stage optimization, and validation,
the procedure ensures that the resulting WCR design achieves desired
statistical performance while maintaining control over calendar timing and
information maturity. The Weibull working baseline used for the landmark
estimand preserves the tractability of the exponential case (closed-form
Gamma posterior after the transformation~\eqref{eq:psi-transform}) while
accommodating non-constant hazards; the HR estimand retains the same
conjugate posterior structure through the generic reference-curve
transformation~\eqref{eq:psi-transform-hr}. Implementation-level details for constructing analysis datasets, applying \texttt{conduct()}, and reproducing simulation workflows are provided in Web Appendix~D.

\section{Introduction to the R Package}
\label{sec:intro}

\subsection{Clinical Background}
\label{sec:background}

Rhabdomyosarcoma (RMS) is the most common soft-tissue sarcoma in
children and adolescents, accounting for approximately 3--4\% of all
pediatric cancers and roughly 350 new diagnoses per year in the
United States \citep{mcevoy2023pediatric}.  The disease arises from
skeletal-muscle progenitor cells and can occur at virtually any
anatomical site, including the head and neck, genitourinary tract,
and extremities.  Two major histological subtypes are recognized:
embryonal RMS, which comprises about 57\% of cases and carries a
relatively favorable prognosis, and alveolar RMS, which is
associated with poorer outcomes.  Treatment is multimodal, combining
surgical resection, radiation therapy, and multi-agent chemotherapy;
the vincristine--dactinomycin--cyclophosphamide (VAC) backbone has
remained the standard of care for over three decades.

Risk stratification in RMS is determined by a combination of
clinical and molecular features, including disease stage, surgical
group, age, tumor size, and---most importantly---FOXO1 fusion status
\citep{hibbitts2019refinement,hettmer2022molecular}.  FOXO1 fusion-positive
tumors carry substantially worse outcomes than fusion-negative tumors,
with event-free survival (EFS) of 52\% versus 78\% for localized
disease.  Patients classified as intermediate risk constitute the
largest subgroup for which new agents must be evaluated, yet two-year
EFS in this group has plateaued at 60--65\% despite successive
treatment intensifications.  The Children's Oncology Group (COG)
Phase~III trial ARST0531 \citep{hawkins2018addition}, which randomized
448 intermediate-risk patients to VAC versus VAC alternating with
vincristine and irinotecan (VI), confirmed VAC as the reference
regimen: four-year EFS was almost 62\% on the VAC arm with no significant
improvement from the addition of VI.

The ongoing RMS2021 study (NCT06023641) at St.~Jude Children's
Research Hospital is a Phase~II trial evaluating the combination of
liposomal irinotecan with VAC for intermediate-risk patients, guided
by molecular risk stratification based on FOXO1 fusion status.
We anchor the null hypothesis to the ARST0531 benchmark, setting
$\theta_L = 0.62$ at the $\tau = 24$-month landmark, and adopt
$\theta_{\mathrm{alt}} = 0.80$---an 18-percentage-point improvement
deemed clinically meaningful by the investigators.  Projected accrual
is approximately 5 patients per year, with a target enrollment of
about 46 patients.  We apply the WCR design to this setting by jointly
calibrating the locked-cohort size $N_W$ and the additional follow-up
window $X$ after the cohort closes.  This construction gives the
interim analysis enough event information to support a futility
decision while keeping the monitoring time scale compatible with a
slow-accruing pediatric trial.  This section uses the
\textsf{WCRBayesDesign} R package to carry out this calibration for
the RMS2021 setting.

For reproducibility, the analyses in this revision refer to
\textsf{WCRBayesDesign} version~1.0.1. The package is available from the
author's r-universe repository,
\url{https://zhongheng-biostatistics.r-universe.dev/WCRBayesDesign}. It provides
the calibration and simulation functions used below, including
\texttt{find\_Nw\_pIA}, \texttt{two\_stage\_optimize\_design},
\texttt{oc\_two\_stage}, and \texttt{conduct()}. For a complete reproducibility archive, the source repository or archived release, R version, RNG settings, seeds, and session information should be reported alongside the package version.

\subsection{Common Clinical Setup}
\label{sec:setup}

The RMS2021 example is worked under both estimands supported by
the WCR design --- the landmark survival $\theta = S(\tau)$ and the
proportional-hazards contrast $\delta = \log S(\tau)/\log S_0(\tau)$
--- which share the clinical inputs summarised below.

\begin{center}
\begin{tabular}{lll}
\toprule
Parameter & Value & Source \\
\midrule
Landmark horizon $\tau$ & 24 months & RMS2021 primary endpoint \\
Null survival $\theta_L$ & 0.62 & ARST0531 VAC arm (observed) \\
Alternative survival $\theta_{\mathrm{alt}}$ & 0.80
  & RMS2021 protocol hypothesis \\
One-sided $\alpha$ & 0.05 & Regulatory convention \\
Target power $1-\beta$ & 0.80 & Regulatory convention \\
Accrual rate $r$ & $5/12$ patients/month (5/year) & RMS2021 projected \\
Gamma prior on $\lambda$ & $a_0 = b_0 = 0.01$ & Weakly informative \\
\bottomrule
\end{tabular}
\end{center}

\noindent
The follow-up window $X$ is searched over a grid spanning the full
range from a purely enrollment-driven interim ($X = 0$) to full
landmark maturity ($X = 24$):
\[
  \mathcal{X} = \{0,\, 3,\, 6,\, 9,\, 12,\, 15,\, 18,\, 21,\, 24\}
  \quad\text{(months)}.
\]
For each value of $X$, the optimization evaluates 20 candidate values
of $N_W$ and selects the smallest interim cutoff $p_{\IA}$ that
satisfies both the futility-sensitivity and futility-specificity
constraints described in Section~\ref{sec:landmark}.

The null survival function $S_0(t)$ is specified as a piecewise
Weibull model rather than a constant-hazard exponential, because the
ARST0531 data exhibit a clearly non-constant hazard.  The model
comprises two Weibull segments joined at $t_{\mathrm{break}} = 24$
months with continuity of the cumulative hazard:
\[
  S_0(t) =
  \begin{cases}
    \exp\!\bigl[-(\lambda_1 t)^{k_1}\bigr],
      & t \le 24, \\[4pt]
    \exp\!\bigl[-H_{24} - (\lambda_2 t)^{k_2} + (\lambda_2 \cdot 24)^{k_2}\bigr],
      & t > 24,
  \end{cases}
\]
where $H_{24} = (\lambda_1 \cdot 24)^{k_1}$ is the cumulative hazard
at the breakpoint.  We fit $S_0(t)$ to the reconstructed ARST0531
VAC-arm Kaplan--Meier curve by differential evolution (DEoptim),
followed by L-BFGS-B refinement, minimizing the maximum absolute
deviation from the reconstructed curve and constraining
$S_0(24) = 0.62$.  The estimated parameters are
\[
  k_1 = 1.889,\quad \lambda_1 = 0.02819,\quad
  k_2 = 0.07857,\quad \lambda_2 = 0.005901.
\]
Shape $k_1 > 1$ reflects an increasing hazard during the first two
years of treatment, while $k_2 \approx 0.08 \ll 1$ captures the
sharp hazard decline after induction is complete---consistent with
the clinical expectation that patients who survive past the treatment
phase face substantially lower risk of relapse.
Figure~\ref{fig:pw-fit} displays the fit.

\begin{figure}[ht]
\centering
\includegraphics[width=0.85\textwidth]{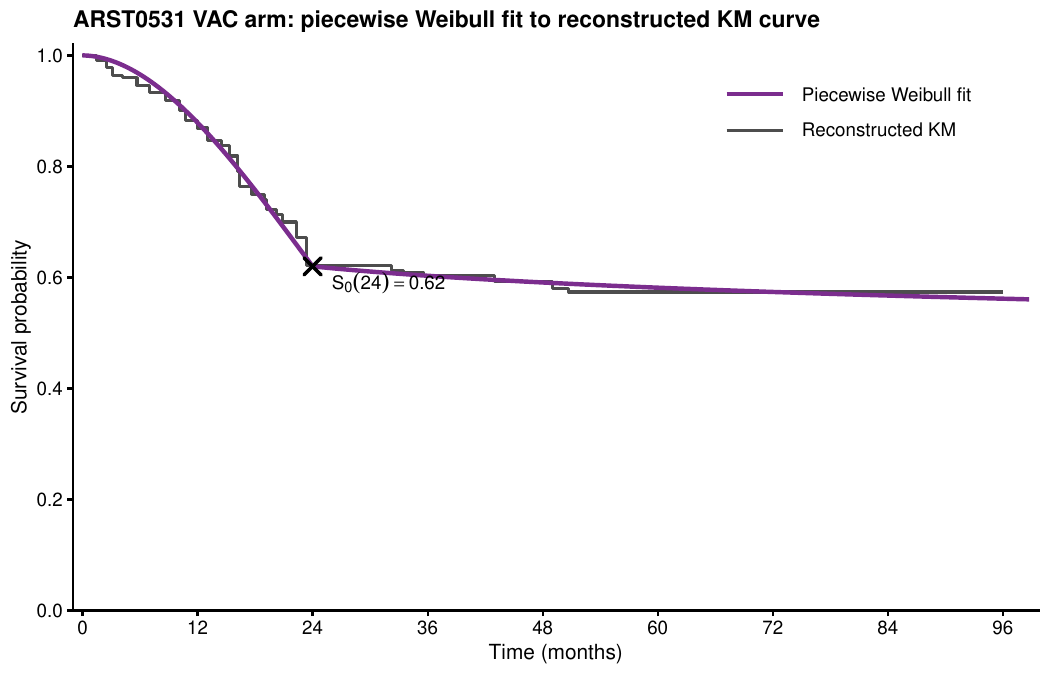}
\caption{Piecewise Weibull fit to the ARST0531 VAC-arm reconstructed
  Kaplan--Meier curve. The cross marks the anchored 2-year survival
  $S_0(24) = 0.62$. Shape $k_1 > 1$ before 24 months reflects increasing
  hazard during treatment; shape $k_2 \approx 0.08$ after 24 months
  captures the sharp hazard decline once induction is complete.}
\label{fig:pw-fit}
\end{figure}

Because the estimated post-24-month shape is very small, this tail fit should be treated as a sensitivity-relevant working representation rather than as a definitive biological estimate. The HR sensitivity analyses in Section~\ref{sec:p2-pw-robustness} therefore evaluate the consequences of reference-curve misspecification explicitly.

The \textsf{WCRBayesDesign} package requires two user-supplied
functions when the null survival model is non-standard: the survival
function $S_0(t)$ itself and its inverse $S_0^{-1}(u)$.  The survival
function is needed to transform the observed follow-up time
$Y_i$ into the information scale $W_i = -\log S_0(Y_i)$ used by the Bayesian
posterior update (Section~3); the inverse is needed to generate
event times through the probability-integral transform
$T = S_0^{-1}(U)$, $U \sim \mathrm{Uniform}(0,1)$.
The package receives these functions through the \texttt{S0\_fun} and
\texttt{S0\_inv\_fun} arguments.  The following code defines the two
functions for the fitted piecewise Weibull model and records the
clinical inputs used throughout the worked example.

\begin{lstlisting}
S0_pw_weibull <- function(t, k1 = 1.888498, lambda1 = 0.028189,
                          k2 = 0.07856898, lambda2 = 0.005901,
                          t_break = 24, ...) {
  H_break <- (lambda1 * t_break)^k1
  ifelse(t <= t_break,
         exp(-(lambda1 * t)^k1),
         exp(-(H_break + (lambda2 * t)^k2 - (lambda2 * t_break)^k2)))
}

S0_pw_weibull_inv <- function(surv_prob, k1 = 1.888498, lambda1 = 0.028189,
                               k2 = 0.07856898, lambda2 = 0.005901,
                               t_break = 24) {
  surv_prob <- pmin(pmax(surv_prob, .Machine$double.eps), 1)
  H_break   <- (lambda1 * t_break)^k1
  H_break2  <- (lambda2 * t_break)^k2
  ifelse(-log(surv_prob) <= H_break,
         (-log(surv_prob))^(1 / k1) / lambda1,
         (-log(surv_prob) - H_break + H_break2)^(1 / k2) / lambda2)
}

S0_par_pw <- list(k1 = 1.888498, lambda1 = 0.028189,
                  k2 = 0.07856898, lambda2 = 0.005901,
                  t_break = 24)

theta_L   <- S0_pw_weibull(24)    # 0.6200
theta_alt <- 0.80

tau       <- 24
alpha_tgt <- 0.05;  beta_tgt <- 0.20
rate      <- 5 / 12
a0 <- 0.01;  b0 <- 0.01
X_grid <- c(0, seq(tau / 8, tau, length.out = 8))
\end{lstlisting}

\noindent
The inverse function is used by the optimization and
operating-characteristic routines to generate event times via
inverse-CDF sampling.

\subsection{Landmark Walkthrough: Estimand \texorpdfstring{$S(\tau)$}{S(tau)}}
\label{sec:landmark}

We first consider the landmark estimand $S(\tau)$, the 24-month
survival probability, implemented via \texttt{method = "landmark"}.
Under the transformed exponential scale, each patient's follow-up
is mapped to $W_i = -\log S_0(Y_i)$, where $Y_i$ denotes the observed
event or censoring time available at the analysis.  Interim and final
decisions are based on the posterior tail probability
$\Pr(S(\tau) < \theta_L \mid \text{data})$.

\subsubsection{Step 1 --- Search \texorpdfstring{$(N_W, X, p_{\IA})$}{(Nw, X, pIA)}
  with \texttt{find\_Nw\_pIA()}}

The search begins by screening all $(N_W, X)$ pairs against two
futility calibration targets: the early stopping probability under
$H_0$ must be at least 0.40, and the
early stopping probability under $H_1$ must not exceed 0.10.  For each admissible pair, the function
identifies the smallest $p_{\IA}$ satisfying both constraints; pairs
with no feasible cutoff are discarded.  The following call executes
this search over the $\mathcal{X}$ grid defined above:

\begin{lstlisting}
set.seed(2025)
Nw_res <- find_Nw_pIA(
  tau = tau, theta_L = theta_L, theta_alt = theta_alt,
  alpha_target = alpha_tgt, beta_target = beta_tgt,
  rate = rate, X_grid = X_grid, num_grid = 20,
  p_H0 = 0.40, p_H1 = 0.10, nsim = 10000,
  S0_dist = "custom", S0_fun = S0_pw_weibull, S0_par = S0_par_pw,
  a0 = a0, b0 = b0, method = "landmark",
  seed = 2025, drop_na = TRUE
)
head(Nw_res, 8)
\end{lstlisting}

\emph{Output (147 feasible triples; first 8 shown):}
\begin{lstlisting}[language={},keywordstyle={},basicstyle=\ttfamily\small]
Feasible triples found: 147

   Nw  X  pIA    piF_H0     piF_H1
    6 24 0.41 0.4111     0.0972
    9 12 0.49 0.4192     0.0981
    9 15 0.46 0.4444     0.0989
    9 18 0.43 0.4511     0.0962
    9 21 0.39 0.4785     0.0973
    9 24 0.37 0.4991     0.0993
   11  6 0.49 0.4078     0.0985
   11  9 0.46 0.4386     0.0972
\end{lstlisting}

A total of 147 triples satisfy both constraints.  The feasible set
shows the expected exchange between cohort size and maturity: small
locked cohorts are viable only when paired with longer follow-up
windows, whereas short-window designs require more patients in the
interim cohort.  Longer windows provide more event information per
patient and therefore make the futility rule more sensitive.

\subsubsection{Step 2 --- Optimize \texorpdfstring{$(N_T, p_{\FA})$}{(NT, pFA)}
  and select a design}

For each admissible interim rule from Step~1, the package selects
$N_T$ and $p_{\FA}$ subject to $\alpha \leq 0.05$ and power $\geq 0.80$
and screens the resulting designs under three optimality
criteria: ESS under $H_0$ (\texttt{optimize = "ESS"}), aggregate
follow-up under $H_0$ (\texttt{optimize = "followup"}), and
maximum enrollment (\texttt{optimize = "minimax"}). In the RMS-2021
setting we treat ESS as the primary criterion because patient
availability is the binding operational constraint, but the three
criteria can disagree --- a low-ESS design may require a larger
$N_T$, a low-$N_T$ design may need a longer waiting window --- and
we report all three for completeness. The code below runs the
optimization and extracts the three benchmark designs:

\begin{lstlisting}
design_lm <- two_stage_optimize_design(
  NwX_pIA_results = Nw_res,
  rate = rate, tau = tau,
  theta_L = theta_L, theta_alt = theta_alt,
  a0 = a0, b0 = b0,
  alpha_target = alpha_tgt, beta_target = beta_tgt,
  nsim = 10000,
  S0_dist = "custom", S0_fun = S0_pw_weibull, S0_par = S0_par_pw,
  S0_inv_fun = S0_pw_weibull_inv,
  method = "landmark",
  seed = 2025, verbose = FALSE, max_iter = 10, ncores = 15
)

lm_all   <- design_lm$all
ess_idx  <- which.min(lm_all$ESS_H0)
fol_idx  <- which.min(lm_all$E_follow_H0)
mmax_idx <- which.min(lm_all$N)

lm_best   <- lm_all[ess_idx,  ]
lm_follow <- lm_all[fol_idx,  ]
lm_mmax   <- lm_all[mmax_idx, ]
\end{lstlisting}

Table~\ref{tab:lm-candidates} presents the three benchmark designs.
Under the piecewise Weibull baseline for this RMS setting, all three
criteria select the same configuration: $N_W = 15$, $X = 18$,
$p_{\IA} = 0.34$, $p_{\FA} = 0.0483$, $N_T = 41$.  This coincidence
should not be expected in general; the HR analysis in
Section~\ref{sec:hr} provides a counterexample in which the three
criteria yield distinct designs.

\begin{table}[ht]
\centering
\caption{Landmark design candidates under the piecewise Weibull working
  model.  All designs meet $\alpha \leq 0.05$ and power $\geq 0.80$.}
\label{tab:lm-candidates}
\resizebox{\textwidth}{!}{%
\begin{tabular}{lccccccccc}
\toprule
Objective & $N_W$ & $X$ & $p_{\IA}$ & $p_{\FA}$ & $N_T$
  & Type I error & Power & $\mathrm{ESS}(H_0)$
  & $E[\mathrm{follow} \mid H_0]$ \\
\midrule
Min $\mathrm{ESS}(H_0)$
  & 15 & 18 & 0.34 & 0.0483 & 41 & 0.0487 & 0.8038 & 29.89 & 554.1 mo \\
Min $E[\mathrm{follow} \mid H_0]$
  & 15 & 18 & 0.34 & 0.0483 & 41 & 0.0487 & 0.8038 & 29.89 & 554.1 mo \\
Min $N_T$
  & 15 & 18 & 0.34 & 0.0483 & 41 & 0.0487 & 0.8038 & 29.89 & 554.1 mo \\
\bottomrule
\end{tabular}}
\end{table}

When the three criteria diverge, the primary design should be chosen
according to the trial's most binding operational constraint.  Because
the three criteria coincide here, we carry this design forward to the
verification step.

\subsubsection{Step 3 --- Verify operating characteristics with
  \texttt{oc\_two\_stage()}}

We verify the selected design by rerunning the operating-characteristic
simulation with 10{,}000 replicates.  Each admissible design carries a
\texttt{seed\_used} field assigned during Step~2; passing this seed to
\texttt{oc\_two\_stage()} reproduces the design-step operating
characteristics exactly.  The same convention applies to all OC
evaluations in this manuscript.  The verification call is shown below:

\begin{lstlisting}
oc_lm <- oc_two_stage(
  N = lm_best$N, Nw = lm_best$Nw, X = lm_best$X,
  pIA = lm_best$pIA, pF = lm_best$pF,
  rate = 5/12, tau = 24,
  theta_L = theta_L, theta_alt = theta_alt,
  a0 = 0.01, b0 = 0.01,
  S0_dist = "custom", S0_fun = S0_pw_weibull, S0_par = S0_par_pw,
  S0_inv_fun = S0_pw_weibull_inv,
  method = "landmark",
  seed = lm_best$seed_used, nsim = 10000
)
\end{lstlisting}

\emph{OC summary:}
\begin{center}
\small
\begin{tabular}{llllllll}
\toprule
Type I error & Power & $\pi_F^{(0)}$ & $\pi_F^{(1)}$
  & $\mathrm{ESS}(H_0)$ & $E[\mathrm{follow} \mid H_0]$
  & $E[T_{\IA}](H_0)$ & $E[T_{\FA}](H_0)$ \\
\midrule
0.0487 & 0.8038 & 0.6024 & 0.0984 & 29.89 & 554.1 mo & 51.5 mo & 120.0 mo \\
\bottomrule
\end{tabular}
\end{center}

The one-sided type~I error rate is 0.049 and power is 0.804, both
meeting the nominal targets.  Under $H_0$, the interim stops
approximately 60\% of trials ($\pi_F^{(0)} = 0.60$), reducing the
expected sample size to 29.9 patients from a maximum enrollment of
41.  Under $H_1$, the probability of premature stopping is 0.098,
which preserves the chance of continuing when the treatment effect is
clinically meaningful.  The expected calendar time to the interim is
51.5 months, early enough for a pediatric oncology data monitoring
committee to act on the futility assessment.

\subsubsection{Step 4 --- Trial conduct with \texttt{conduct()}}
\label{sec:conduct-lm}

Given patient-level data, \texttt{conduct()} applies the prespecified
Bayesian decision rule and returns the posterior summary on the
survival-probability scale (\texttt{method = "landmark"}).  The
function expects a data frame with columns \texttt{Time} (follow-up in
months, capped at $\tau$) and \texttt{Status} (1\,=\,event,
0\,=\,censored).

To illustrate, we simulate a single trial under $H_1$ and apply the
design from Table~\ref{tab:lm-candidates}
($N_W = 15$, $X = 18$, $N_T = 41$).
Accrual times follow an exponential inter-arrival process at rate
$r = 5/12$; event times are drawn via inverse-CDF sampling from
$S_0(t)$ under $H_1$.  At the interim calendar time
$T_{\IA} = a_{(N_W)} + X = 38.88$ months, each locked patient's
follow-up is capped at $\tau = 24$ months and the event indicator
is set accordingly.  Table~\ref{tab:ia-data-lm} displays the locked
cohort; full R code for data generation and the \texttt{conduct()}
calls is given in Web Appendix~A.

\begin{table}[ht]
\centering
\caption{Simulated locked cohort ($N_W = 15$) at the interim
  analysis ($T_{\IA} = 38.88$ months) under $H_1$.
  $a_i$: enrollment time (months); Time: observed follow-up
  (capped at $\tau = 24$); Status: event indicator.}
\label{tab:ia-data-lm}
\small
\begin{tabular}{rrrr}
\toprule
$i$ & $a_i$ & Time & Status \\
\midrule
 1 &  2.02 & 24.00 & 0 \\
 2 &  3.41 & 24.00 & 0 \\
 3 &  6.60 & 24.00 & 0 \\
 4 &  6.67 & 24.00 & 0 \\
 5 &  6.81 & 24.00 & 0 \\
 6 &  7.57 & 24.00 & 0 \\
 7 &  8.32 & 24.00 & 0 \\
 8 &  8.67 & 24.00 & 0 \\
 9 & 15.21 & 23.67 & 0 \\
10 & 15.28 & 23.60 & 0 \\
11 & 17.70 & 21.18 & 0 \\
12 & 18.85 & 20.03 & 0 \\
13 & 19.52 & 19.36 & 0 \\
14 & 20.43 &  5.78 & 1 \\
15 & 20.88 & 16.75 & 1 \\
\bottomrule
\end{tabular}
\end{table}

The interim analysis applies \texttt{conduct()} to this locked
cohort (see Web Appendix~A for the call).

\emph{Interim result:}
\begin{lstlisting}[language={},keywordstyle={},basicstyle=\ttfamily\small]
$Current.sample.size
[1] 15

$Posterior.shape
[1] 2.01

$Posterior.rate
[1] 6.0741

$`Posterior tail probability (Pr >= threshold)`
[1] 0.0165

$`Posterior mean of survival rate`
[1] 0.8537

$`95% credible interval of survival rate`
[1] 0.6441 0.9809

$Decision
[1] "Go (Continue)"
\end{lstlisting}

Of the 15 locked patients, 2 experience events within $\tau = 24$
months.  The posterior mean 2-year survival is $\hat\theta = 0.85$,
with 95\% credible interval $[0.64,\, 0.98]$.  The posterior tail
probability is 0.017, well below the futility cutoff
$p_{\IA} = 0.34$; the trial therefore continues to full enrollment.

After all $N_T = 41$ patients complete $\tau = 24$ months of
follow-up, the final analysis proceeds.

\emph{Final result:}
\begin{lstlisting}[language={},keywordstyle={},basicstyle=\ttfamily\small]
$Current.sample.size
[1] 41

$Posterior.shape
[1] 8.01

$Posterior.rate
[1] 17.1184

$`Posterior tail probability (Pr >= threshold)`
[1] 0.0051

$`Posterior mean of survival rate`
[1] 0.7996

$`95% credible interval of survival rate`
[1] 0.6682 0.9079

$Decision
[1] "Efficacious"
\end{lstlisting}

With all 41 patients contributing, 8 experience events within
$\tau = 24$ months.  The posterior mean 2-year survival is
$\hat\theta = 0.80$ (95\% credible interval $[0.67,\, 0.91]$), and
the posterior tail probability is 0.005.  Since this probability is
below the efficacy cutoff $p_{\FA} = 0.0483$, the trial concludes
``Efficacious.''

\subsection{Hazard-Ratio Walkthrough: Estimand \texorpdfstring{$\delta$}{delta}}
\label{sec:hr}

We now consider the proportional-hazards estimand
$\delta = \log S(\tau) / \log S_0(\tau)$, accessed via
\texttt{method = "hr"}.  Under this parameterization,
$S_1(t) = S_0(t)^\delta$ for all $t$, and $\delta < 1$ indicates a
treatment benefit.  The design alternative corresponds to
$\delta_{\mathrm{alt}} = \log(0.80)/\log(0.62) \approx 0.4667$,
yielding $S_1(24) = 0.80$.  The three-step workflow parallels
Section~\ref{sec:landmark}; full R code is provided in
Web Appendix~B.

\subsubsection{Step 1 --- Feasible \texorpdfstring{$(N_W, X, p_{\IA})$}{(Nw, X, pIA)}
  triples}

Applying the same calibration targets (early stopping probability
$\geq 0.40$ under $H_0$ and $\leq 0.10$ under $H_1$), the HR-mode
search produces 148 feasible interim rules.  The first six are shown below.

\begin{center}
\small
\begin{tabular}{rrrrr}
\toprule
$N_W$ & $X$ & $p_{\IA}$ & $\pi_F^{(0)}$ & $\pi_F^{(1)}$ \\
\midrule
6 & 21 & 0.43 & 0.4036 & 0.0959 \\
6 & 24 & 0.48 & 0.4226 & 0.0998 \\
9 & 12 & 0.48 & 0.4271 & 0.0997 \\
9 & 15 & 0.46 & 0.4438 & 0.0963 \\
9 & 18 & 0.42 & 0.4635 & 0.0992 \\
9 & 21 & 0.38 & 0.4929 & 0.0989 \\
\bottomrule
\end{tabular}
\end{center}

As in the landmark analysis, increasing $X$ raises the null
futility-stop rate at the cost of a later interim calendar time.

\subsubsection{Step 2 --- Candidate designs}

Optimization with \texttt{method = "hr"} produces 148 admissible
designs.  Table~\ref{tab:hr-candidates} presents the three benchmark
candidates.  In contrast to the landmark case, the three optimality
criteria now select distinct designs.  The HR likelihood extracts
information from the full event-time distribution rather than reducing
each patient to a binary indicator at $\tau$, so the trade-offs among
$N_W$, $X$, and $N_T$ are less uniform.

\begin{table}[ht]
\centering
\caption{HR-mode design candidates under the piecewise Weibull working
  model ($\theta_L = 0.62$, $\theta_{\mathrm{alt}} = 0.80$, $r = 5/12$,
  nsim\,=\,10\,000, num\_grid\,=\,20).}
\label{tab:hr-candidates}
\resizebox{\textwidth}{!}{%
\begin{tabular}{lcccccccccc}
\toprule
Objective & $N_W$ & $X$ & $p_{\IA}$ & $p_{\FA}$ & $N_T$ & Type I error
  & Power & $\mathrm{ESS}(H_0)$ & $E[\mathrm{follow} \mid H_0]$
  & $E[T_{\IA}]$ \\
\midrule
Min $\mathrm{ESS}(H_0)$
  & 19 & 3 & 0.34 & 0.0406 & 39 & 0.0455 & 0.8017 & 28.02 & 1029.2 mo & 46.1 mo \\
Min $E[\mathrm{follow} \mid H_0]$
  & 17 & 15 & 0.29 & 0.0483 & 37 & 0.0479 & 0.8058 & 28.12 & 961.8 mo & 53.3 mo \\
Min $N_T$
  & 25 & 9 & 0.21 & 0.0406 & 36 & 0.0500 & 0.8022 & 30.65 & 997.5 mo & 66.6 mo \\
\bottomrule
\end{tabular}}
\end{table}

The ESS-optimal design ($N_W = 19$, $X = 3$) triggers the interim
early and achieves the lowest expected enrollment under $H_0$ (28.0
patients) but accumulates the highest aggregate follow-up (1029
person-months).  The follow-up-optimal design ($N_W = 17$, $X = 15$)
delays the interim to allow more maturity per patient, reducing
aggregate follow-up to 962 person-months at a modest ESS increase.
The minimax design ($N_W = 25$, $X = 9$) caps total enrollment at
$N_T = 36$ but carries the highest ESS (30.7), because fewer null
trials trigger early stopping.  For RMS2021, where the patient pool is
limited, the minimax design may be the most operationally defensible;
if rapid termination of futile regimens is the priority, the ESS
design is preferable.

We verify all three candidate designs via independent
\texttt{oc\_two\_stage()} runs.

\subsubsection{Step 3 --- OC verification}

Unlike the landmark case, where the three optimality criteria selected
a single design, the HR-mode candidates differ in their
$(N_W, X, N_T)$ configurations.  We therefore run
\texttt{oc\_two\_stage()} for each candidate
($\text{nsim} = 10{,}000$) and report the results in
Table~\ref{tab:hr-oc}.

\begin{table}[ht]
\centering
\caption{OC verification of the three HR-mode candidate designs
  ($\text{nsim} = 10{,}000$, each using its design-step \texttt{seed\_used}).}
\label{tab:hr-oc}
\resizebox{\textwidth}{!}{%
\begin{tabular}{lcccccccc}
\toprule
Design & Type I error & Power & $\pi_F^{(0)}$ & $\pi_F^{(1)}$
  & $\mathrm{ESS}(H_0)$ & $E[\mathrm{follow} \mid H_0]$
  & $E[T_{\IA}](H_0)$ & $E[T_{\FA}](H_0)$ \\
\midrule
Min $\mathrm{ESS}(H_0)$
  & 0.0455 & 0.8017 & 0.5854 & 0.0959 & 28.02 & 1029.2 mo & 46.1 mo & 114.6 mo \\
Min $E[\mathrm{follow} \mid H_0]$
  & 0.0479 & 0.8058 & 0.6449 & 0.0966 & 28.12 & 961.8 mo & 53.3 mo & 110.0 mo \\
Min $N_T$
  & 0.0500 & 0.8022 & 0.7377 & 0.0958 & 30.65 & 997.5 mo & 66.6 mo & 108.2 mo \\
\bottomrule
\end{tabular}}
\end{table}

All three HR candidates meet the nominal targets ($\alpha \le 0.05$
and power $\ge 0.80$).  The substantive contrasts
are clear: the follow-up-optimal
design gives the lowest aggregate follow-up (962 person-months), the
ESS-optimal design gives the lowest expected enrollment (28.0
patients), and the minimax design gives the smallest maximum sample
size ($N_T = 36$) with the highest null futility-stop rate
($\pi_F^{(0)} = 0.74$).

\subsubsection{Step 4 --- Trial conduct with \texttt{conduct()}}
\label{sec:conduct-hr}

The HR analogue uses the ESS-optimal design ($N_W = 19$, $X = 3$,
$N_T = 39$) and \texttt{method = "hr"}, so that \texttt{conduct()}
reports the posterior on the $\delta$ scale rather than the
survival-probability scale.  The procedural difference from the
landmark version lies in censoring: at the interim, follow-up is not
truncated at $\tau$, so early enrollees may contribute substantially
more than 24 months of information; at the final analysis, the study
ends at $t_{\mathrm{final}} = a_{(N)} + \tau$ and each patient's
follow-up is $f_i = t_{\mathrm{final}} - a_i$.

We reuse the same $H_1$-generation process described in
Section~\ref{sec:conduct-lm}, now with the HR design parameters
($N_W = 19$, $X = 3$, $N_T = 39$) and without truncating follow-up
at $\tau$.  The \texttt{conduct()} calls are given in
Web Appendix~A.

\emph{Interim result:}
\begin{lstlisting}[language={},keywordstyle={},basicstyle=\ttfamily\small]
$Current.sample.size
[1] 19

$Posterior.shape
[1] 0.01

$Posterior.rate
[1] 5.2254

$`Posterior tail probability (Pr >= threshold)`
[1] 0.0000

$`Posterior mean of delta`
[1] 0.0019

$`95% credible interval of delta`
[1] 0.0000 0.0090

$Decision
[1] "Go (Continue)"
\end{lstlisting}

With $X = 3$ months of additional follow-up, none of the 19 locked
patients experience events by the interim calendar time; the 19
censored patients contribute follow-up periods of varying lengths,
yielding a total transformed time of 5.23.  The posterior is dominated
by the prior and the accumulated censoring: $\hat\delta = 0.002$ with
95\% credible interval $[0.00,\, 0.01]$.  Because
$\Pr(\delta \geq \delta_L \mid \text{data}) \approx 0$ falls well
below $p_{\IA} = 0.34$, the trial continues.

\emph{Final result:}
\begin{lstlisting}[language={},keywordstyle={},basicstyle=\ttfamily\small]
$Current.sample.size
[1] 39

$Posterior.shape
[1] 8.01

$Posterior.rate
[1] 19.3021

$`Posterior tail probability (Pr >= threshold)`
[1] 0.0013

$`Posterior mean of delta`
[1] 0.4150

$`95% credible interval of delta`
[1] 0.1793 0.7479

$Decision
[1] "Efficacious"
\end{lstlisting}

The full trial enrolls 39 patients, of whom 8 experience events by
the final calendar time.  The posterior mean
$\hat\delta = 0.42$ with 95\% credible interval $[0.18,\, 0.75]$
indicates a clear treatment benefit; the interval excludes the null
value $\delta_L = 1.0$.  The tail probability is 0.001, which falls
below $p_{\FA} = 0.0406$, so the trial concludes ``Efficacious.''

\subsubsection{Landmark vs.\ hazard-ratio comparison}

Table~\ref{tab:comparison} compares the operating characteristics of
the selected landmark and HR designs under the same baseline model and
accrual assumptions. These rows compare optimized design recommendations under two estimands; they should not be interpreted as isolating the likelihood effect while holding $(N_W,X,N_T)$ fixed.

\begin{table}[ht]
\centering
\caption{Operating characteristics of the selected landmark and HR
  designs under the piecewise Weibull model
  ($r = 5/12$, $\theta_{\mathrm{alt}} = 0.80$,
  nsim\,=\,10\,000).}
\label{tab:comparison}
\small
\begin{tabular}{lcccccc}
\toprule
Design & $N_T$ & Type I error & Power & $\pi_F^{(0)}$
  & $\mathrm{ESS}(H_0)$ & $E[\mathrm{follow} \mid H_0]$ \\
\midrule
Landmark & 41 & 0.0487 & 0.804 & 0.602 & 29.89 & 554.1 mo \\
Hazard ratio & 39 & 0.0455 & 0.802 & 0.585 & 28.02 & 1029.2 mo \\
\bottomrule
\end{tabular}
\end{table}

The HR design achieves a smaller maximum sample size ($N_T = 39$
vs.\ 41) because it extracts information from the full event-time
distribution, but this efficiency gain is offset by a substantially
higher follow-up burden: 1029 person-months under $H_0$ compared with
554 for the landmark design.  In a multi-site pediatric trial, the
additional person-months translate into more clinic visits, more data
collection, and greater site coordination.  The two designs have
comparable futility-stop rates ($\pi_F^{(0)} = 0.59$ vs.\ 0.60)
and similar expected interim calendar times ($E[T_{\IA}] \approx
46$--52 months), so neither early-stopping performance nor monitoring
schedule differentiates the two.

The estimand should be chosen from the clinical question rather than
from sample size alone.  RMS2021 is designed around 2-year EFS, and
the ARST0531 historical reference is defined at the same landmark;
the landmark estimand is therefore the natural match.  It has a
direct clinical interpretation and a simpler follow-up protocol, since
each patient's contribution is capped at $\tau$ months.  The HR
estimand can be more statistically efficient when proportional hazards
is credible, but it requires heavier follow-up and a more complex
monitoring structure.  For this application, we recommend the
landmark design.  The HR version is better suited to trials with a
tighter enrollment cap, credible prior support for proportional
hazards, and enough follow-up capacity to support the additional
information burden.

\subsubsection{Prior-sensitivity check}

The preceding analyses use the weak Gamma$(0.01,0.01)$ prior on the
transformed hazard parameter.  To check that the design conclusions are
not driven by this near-flat prior, we re-evaluate the selected
landmark and HR designs under two heavier but still weakly informative
priors, Gamma$(0.1,0.1)$ and Gamma$(1,1)$, holding the stopping rules
fixed.  The simulations use the same piecewise Weibull working model,
the same calibrated thresholds, and $10{,}000$ replicates with the
design-anchored seeds used in the verification step.

\begin{table}[ht]
\centering
\caption{Prior-sensitivity check.  Design parameters fixed as in
  Tables~\ref{tab:lm-candidates}--\ref{tab:hr-candidates};
  only the Gamma$(a_0,b_0)$ prior is varied ($10{,}000$ replicates).}
\label{tab:prior-sensitivity}
\small
\begin{tabular}{llcccccc}
\toprule
Estimand & Prior $(a_0,b_0)$ & Type I error & Power
  & $\pi_F^{(0)}$ & $\pi_F^{(1)}$ & $\mathrm{ESS}(H_0)$
  & $E[\mathrm{follow}\mid H_0]$ \\
\midrule
Landmark & $(0.01,0.01)$ & 0.0487 & 0.8039 & 0.6024 & 0.0984 & 29.89 & 554.1 \\
Landmark & $(0.10,0.10)$ & 0.0482 & 0.8019 & 0.6039 & 0.0996 & 29.86 & 553.4 \\
Landmark & $(1.00,1.00)$ & 0.0435 & 0.7863 & 0.6202 & 0.1078 & 29.56 & 545.6 \\
HR       & $(0.01,0.01)$ & 0.0454 & 0.8014 & 0.5855 & 0.0960 & 28.02 & 1029.1 \\
HR       & $(0.10,0.10)$ & 0.0450 & 0.8001 & 0.5865 & 0.0965 & 28.00 & 1027.6 \\
HR       & $(1.00,1.00)$ & 0.0400 & 0.7846 & 0.6022 & 0.1047 & 27.70 & 1004.3 \\
\bottomrule
\end{tabular}
\end{table}

The Gamma$(0.1,0.1)$ prior leaves the operating characteristics almost
unchanged for both estimands: type I error, power, and ESS differ from
the baseline prior by less than 0.003 in rejection probability and
less than 0.03 patients in ESS.  The stronger Gamma$(1,1)$ prior is
more conservative, reducing power to 0.786 for the landmark design and
0.785 for the HR design, while also lowering type I error.  This
pattern is consistent with the prior placing more weight near the
null-scale hazard ratio.  The practical conclusion is that the
near-flat Gamma$(0.01,0.01)$ and Gamma$(0.1,0.1)$ priors give the same
design recommendation, whereas Gamma$(1,1)$ is conservative enough to
warrant recalibration if chosen for a protocol.


\section{Simulation Study}
\label{sec:simulation}

\subsection{From a Worked Example to a Systematic Sensitivity Study}

Section~4 calibrates and verifies one WCR design for RMS2021: the
reference survival $S_0$ is the piecewise Weibull fit to the ARST0531
VAC-arm Kaplan--Meier curve, the design's objective is to optimize the ESS under
$H_0$, and the primary estimand is the landmark survival at
$\tau = 24$ months.  That worked example is intentionally specific.
It does not show how sensitive the WCR rule is to the assumed baseline
hazard, the realized accrual rate, and the true survival probability.
This section evaluates those dependencies systematically. For clarity, the Section~5 sensitivity grids use a single-Weibull reference family anchored at $S_0(24)=0.62$ rather than the piecewise-Weibull RMS2021 fit used in Section~4; consequently, optimized $(N_W,X,N_T)$ values in this section are sensitivity-study recommendations under their own working models, not contradictions of the Section~4 RMS2021 worked example.

The first sensitivity analysis varies the shape of the baseline hazard.
The Bayesian calibration depends on $S_0$ through the transformed
information times $W_i = -\log S_0(Y_i)$, where $Y_i$ denotes the
observed event or censoring time available for patient $i$ at the
analysis. In practice, however, the planner may have limited confidence
regarding the correct parametric form of $S_0$ prior to trial initiation.
We therefore replace the piecewise Weibull fit by the one-parameter
Weibull family
\[
S_0(t)=\exp[-(\lambda t)^k],
\]
with $\lambda$ chosen such that $S_0(24)=0.62$.

The shape parameter
$k \in \{0.8,1.0,1.2\}$ spans decreasing, constant, and increasing
hazard regimes. The exponential case $k=1$ serves as the neutral
reference scenario, whereas $k=0.8$ corresponds to a decreasing hazard with front-loaded event accumulation and $k=1.2$ corresponds to an increasing hazard with back-loaded event accumulation. 

From an interim monitoring perspective, varying $k$ changes the temporal
rate at which inferential information emerges over calendar time,
thereby creating settings with relatively early versus delayed
information maturity at interim analysis. These scenarios therefore
assess not only robustness to baseline hazard misspecification, but also
the stability of the proposed framework under differing temporal
information structures and non-idealized hazard dynamics.

The second sensitivity analysis fixes the design calibrated under
$k = 1$ and re-evaluates its operating characteristics over a
$3 \times 5$ grid of annual accrual rates and true survival
probabilities. The proposed design is calibrated under assumed accrual
and event-generation dynamics; however, both quantities are typically
subject to substantial uncertainty in practice. In particular, accrual
rates directly influence the relationship between calendar time and
follow-up maturity, whereas the underlying survival distribution governs
the temporal rate at which inferential information accumulates.

This sensitivity analysis therefore evaluates the robustness of the
proposed information--time alignment framework under joint operational
and clinical misspecification. Varying the accrual rate alters the
maturity structure of the cohort available at interim analysis and may
substantially change the amount of effective follow-up observed by a
given calendar time. Simultaneously, varying the true survival
probability modifies the event-generation process and consequently the
rate of information emergence over time. Together, these scenarios
assess whether the calibrated design maintains stable operating
characteristics when the realized calendar-time trajectory of
information maturity deviates from the assumptions used during the
design stage.

The analysis also expands the design space beyond the single
Section~4 recommendation. We evaluate both estimands and all three
optimization criteria implemented in the package: ESS, total
follow-up, and minimax-$N_T$. More importantly, the analysis explicitly
characterizes the ethical and operational burden induced by interim
timing mechanisms themselves. In long-horizon and sparse-event settings,
prolonged waiting periods prior to interim decisions may expose
additional patients to potentially ineffective therapy before futility
can be declared. Under conventional monitoring paradigms, this burden is
often treated as incidental or unavoidable. In contrast, the proposed
framework treats such exposure as an explicit and measurable operating
characteristic of the monitoring design.

To quantify this phenomenon, we introduce two ethical-burden metrics
describing the operational profile of the WCR framework. Let
$\mathcal{I} = \{i : a_i \in (a_{(N_W)},\, a_{(N_W)} + X]\}$ index
the patients enrolled during the waiting window and let
$T_{\IA} = a_{(N_W)} + X$ denote the calendar time of the interim
analysis. Define
\begin{equation}
  B_W \;=\; \mathbb{E}\bigl[\,|\mathcal{I}|\,\bigr],
  \qquad
  \bar F_W \;=\;
  \mathbb{E}\!\left[
    \frac{1}{|\mathcal{I}|}
    \sum_{i \in \mathcal{I}}
    \min\bigl(T_i,\; T_{\IA} - a_i,\; \tau\bigr)
    \,\Bigm|\, |\mathcal{I}| > 0
  \right],
  \label{eq:ethical-burden}
\end{equation}
where the per-patient quantity inside the expectation represents the
observed follow-up accrued prior to the interim decision, censored at
the event time $T_i$ and, under landmark estimands, at the clinically
relevant horizon $\tau$. The truncation at $\tau$ reflects the fact
that follow-up accrued beyond the landmark horizon does not contribute
additional inferential information under the restricted estimand
framework. In contrast, under proportional hazards--based estimands
with unrestricted follow-up usage, the truncation at $\tau$ may be
removed, since follow-up beyond $\tau$ continues to contribute to the
risk-set structure and inferential information. Consequently, the
proposed ethical-burden metrics are themselves estimand-dependent and
should be interpreted relative to the effective information horizon
associated with the target estimand. By construction,
$\bar F_W \le X$ under the restricted framework. When
$|\mathcal{I}|=0$, the realized enrollment burden and aggregate follow-up
burden are both defined as zero; the conditional mean $\bar F_W$ is reported
only for simulations with at least one waiting-window enrollee.

Together, $B_W$ and $\bar F_W$ quantify two complementary dimensions of
decision-lag burden: the expected number of additional patients exposed
during the waiting window and the average duration of interim-period
follow-up accrued by these patients prior to decision-making. These
quantities permit direct comparison of monitoring paradigms with respect
to ethically relevant operational exposure, thereby extending evaluation
beyond conventional metrics such as type I error, power, or expected
sample size alone.

Importantly, although the current work treats ethical burden primarily
as a descriptive operational characteristic, the proposed quantities
naturally admit incorporation into constrained or utility-based design
optimization frameworks. For example, future calibration strategies may
jointly optimize inferential maturity, calendar efficiency, and ethical
exposure under user-specified constraints. Such extensions may provide a
more principled framework for balancing statistical information against
patient-level exposure in modern adaptive monitoring settings. After the
sensitivity analyses are complete, Section~\ref{sec:comparison}
benchmarks WCR against existing single-arm monitoring designs using both
traditional operating characteristics and the proposed ethical-burden
metrics.
\subsection{Sensitivity to the Weibull Shape}
\label{sec:1A}

Table~\ref{tab:setup} lists the inputs held fixed across the
sensitivity scenarios.  They match the RMS2021 inputs in Section~4,
except that the piecewise Weibull reference is replaced by a
single-piece Weibull reference.  For
$k \in \{0.8, 1.0, 1.2\}$, the constraint $S_0(24) = 0.62$ gives
$\lambda = 0.01656$, $0.01992$, and $0.02253$, respectively.  Within
each $(k, \text{estimand})$ cell, we run the full calibration
pipeline and then verify the ESS-optimal, follow-up-optimal, and
minimax-$N_T$ designs by independent simulation.
Tables~\ref{tab:1A-landmark} and~\ref{tab:1A-hr} report the eighteen
resulting designs.

\begin{table}[ht]
\centering
\caption{Common simulation setup for the WCR sensitivity analysis.}
\label{tab:setup}
\begin{tabular}{ll}
\toprule
Parameter & Value \\
\midrule
Landmark horizon $\tau$ & 24 months \\
Null survival $\theta_L$ & 0.62 \\
Design alternative $\theta_{\mathrm{alt}}$ & 0.80 \\
Reference $S_0(t)$ & $\exp\bigl[-(\lambda t)^k\bigr]$, with
  $\lambda$ such that $S_0(24) = 0.62$ \\
Gamma prior $(a_0, b_0)$ & $(0.01, 0.01)$ \\
Type I target $\alpha$ & 0.05 \\
Power target $1 - \beta$ & 0.80 \\
IA futility calibration & $\pi_F^{(0)} \geq 0.40$,
  $\pi_F^{(1)} \leq 0.10$ \\
Follow-up grid $\mathcal{X}$ & $\{0,3,6,9,12,15,18,21,24\}$ months \\
\bottomrule
\end{tabular}
\end{table}

\begin{table}[ht]
\centering
\caption{Weibull-shape sensitivity, landmark-estimand designs
  ($r = 5/12$, $\theta_L = 0.62$, $\theta_{\mathrm{alt}} = 0.80$,
  $10{,}000$ replicates).  All OC metrics are under $H_0$;
  $\Bw$ and $\Fw$ are the waiting-window enrollment count and
  mean per-patient follow-up at $T_{\IA}$.}
\label{tab:1A-landmark}
\footnotesize
\begin{tabular}{llcccccccccccc}
\toprule
$k$ & Criterion & $N_W$ & $X$ & $p_{\IA}$ & $p_{\FA}$ & $N_T$
  & $\alpha$ & Power & $\mathrm{ESS}$ & $E[T_{\IA}]$ & $E[T_{\FA}]$
  & $\Bw$ & $\Fw$ \\
\midrule
\multirow{3}{*}{0.8}
  & ESS         & 17 &  6 & 0.330 & 0.0406 & 43 & 0.0479 & 0.801 & 28.95 &  44.3 & 124.4 & 2.51 & 2.56 \\
  & Follow-up   & 17 &  6 & 0.330 & 0.0406 & 43 & 0.0479 & 0.801 & 28.95 &  44.3 & 124.4 & 2.51 & 2.56 \\
  & Minimax     & 15 & 18 & 0.330 & 0.0483 & 41 & 0.0493 & 0.807 & 29.92 &  51.5 & 120.0 & 7.56 & 7.75 \\
\midrule
\multirow{3}{*}{1.0}
  & ESS         & 19 &  3 & 0.310 & 0.0406 & 45 & 0.0432 & 0.810 & 29.59 &  46.1 & 128.8 & 1.24 & 1.04 \\
  & Follow-up   & 23 &  0 & 0.280 & 0.0406 & 43 & 0.0463 & 0.810 & 29.86 &  52.9 & 124.5 & 0.00 & 0.00 \\
  & Minimax     & 15 & 18 & 0.330 & 0.0483 & 41 & 0.0490 & 0.806 & 29.84 &  51.5 & 119.9 & 7.56 & 7.99 \\
\midrule
\multirow{3}{*}{1.2}
  & ESS         & 15 & 18 & 0.330 & 0.0483 & 41 & 0.0488 & 0.804 & 29.75 &  51.5 & 119.9 & 7.56 & 8.18 \\
  & Follow-up   & 23 &  3 & 0.260 & 0.0406 & 43 & 0.0481 & 0.809 & 30.20 &  55.7 & 124.5 & 1.24 & 1.04 \\
  & Minimax     & 15 & 18 & 0.330 & 0.0483 & 41 & 0.0488 & 0.804 & 29.75 &  51.5 & 119.9 & 7.56 & 8.18 \\
\bottomrule
\end{tabular}
\end{table}

\begin{table}[ht]
\centering
\caption{Weibull-shape sensitivity, hazard-ratio--estimand designs. Conventions as in
  Table~\ref{tab:1A-landmark}. The HR analysis does not truncate
  follow-up at $\tau$, so $E[T_{\FA}]$ is generally smaller than for
  the landmark designs at comparable $X$.}
\label{tab:1A-hr}
\footnotesize
\begin{tabular}{llcccccccccccc}
\toprule
$k$ & Criterion & $N_W$ & $X$ & $p_{\IA}$ & $p_{\FA}$ & $N_T$
  & $\alpha$ & Power & $\mathrm{ESS}$ & $E[T_{\IA}]$ & $E[T_{\FA}]$
  & $\Bw$ & $\Fw$ \\
\midrule
\multirow{3}{*}{0.8}
  & ESS         & 15 &  0 & 0.400 & 0.0406 & 28 & 0.0411 & 0.811 & 21.24 &  33.7 & 88.8 & 0.00 & 0.00 \\
  & Follow-up   & 19 &  0 & 0.300 & 0.0406 & 27 & 0.0443 & 0.811 & 21.98 &  43.2 & 86.2 & 0.00 & 0.00 \\
  & Minimax     & 19 &  0 & 0.300 & 0.0406 & 27 & 0.0443 & 0.811 & 21.98 &  43.2 & 86.2 & 0.00 & 0.00 \\
\midrule
\multirow{3}{*}{1.0}
  & ESS         & 13 &  3 & 0.410 & 0.0483 & 25 & 0.0482 & 0.812 & 19.53 &  31.8 & 81.5 & 1.25 & 1.04 \\
  & Follow-up   & 17 &  3 & 0.320 & 0.0483 & 24 & 0.0489 & 0.812 & 20.41 &  41.3 & 79.1 & 1.22 & 1.03 \\
  & Minimax     & 17 &  3 & 0.320 & 0.0483 & 24 & 0.0489 & 0.812 & 20.41 &  41.3 & 79.1 & 1.22 & 1.03 \\
\midrule
\multirow{3}{*}{1.2}
  & ESS         & 13 &  3 & 0.420 & 0.0330 & 25 & 0.0336 & 0.804 & 19.71 &  31.8 & 81.5 & 1.25 & 1.04 \\
  & Follow-up   & 11 & 15 & 0.370 & 0.0406 & 23 & 0.0429 & 0.802 & 19.76 &  38.9 & 76.6 & 6.22 & 6.98 \\
  & Minimax     & 11 & 15 & 0.370 & 0.0406 & 23 & 0.0429 & 0.802 & 19.76 &  38.9 & 76.6 & 6.22 & 6.98 \\
\bottomrule
\end{tabular}
\end{table}

All eighteen designs satisfy the $\alpha \le 0.05$ and power
$\ge 0.80$ constraints under the simulation seed at which the
optimizer accepts them.  Type~I error ranges from 0.0432 to 0.0493
for the landmark designs and from 0.0336 to 0.0489 for the HR
designs; power ranges from 0.801 to 0.810 for landmark and from
0.802 to 0.812 for HR.  The HR estimand requires much smaller maximum
sample sizes than the landmark estimand ($N_T \in \{23, \dots, 28\}$
versus $\{41, 43, 45\}$), because its likelihood uses the full
event-time distribution rather than the binary survival indicator at
$\tau$.

The three optimization criteria target different aspects of trial
economics.  The ESS-optimal criterion lowers the expected sample size
under $H_0$ by stopping earlier when the regimen is ineffective, but
often does so by allowing a larger maximum $N_T$.  At $k = 1$ under
the landmark estimand, for example, the ESS design uses $N_T = 45$,
whereas the follow-up and minimax designs use 43 and 41.  The minimax
criterion reduces $N_T$ by lengthening the waiting window: every
minimax landmark design selects $X = 18$, compared with
$X \in \{3, 6\}$ for the ESS designs.  The follow-up-optimal
criterion shortens the expected final-analysis time but may enroll
more patients during the waiting window.  In the $k = 1.2$ HR cell,
the follow-up-optimal design has $\Bw = 6.22$ patients enrolled
during the waiting window, compared with $\Bw = 1.25$ for the ESS
design.

From an information--time perspective, varying the Weibull shape
parameter changes the temporal rate at which inferential information
emerges over calendar time. Increasing hazards concentrate event
accumulation later in the follow-up trajectory, whereas decreasing
hazards generate earlier information emergence. Consequently, the
effective maturity profile available at interim analysis depends not
only on the waiting window $X$, but also on how rapidly events become
observable within the locked cohort.

The optimizer responds to these differing information-emergence
dynamics by rebalancing the trade-off between locked-cohort follow-up
and continued enrollment. Under increasing hazards $(k = 1.2)$,
events accumulate later in each patient's trajectory, so early interim
datasets contain relatively immature event information. The landmark
ESS-optimal design therefore lengthens the waiting window from
$X = 3$ at $k = 1$ to $X = 18$, while simultaneously reducing the
locked-cohort size from $N_W = 19$ to $N_W = 15$, thereby allowing
additional calendar time for event information to mature within the
locked cohort prior to interim evaluation. Under decreasing hazards
$(k = 0.8)$, events occur earlier and inferential information emerges
more rapidly, allowing a substantially shorter waiting window
($X = 6$) while maintaining the required operating characteristics.

The HR estimand exhibits substantially less sensitivity to Weibull
shape variation because the unrestricted likelihood continues to use
follow-up information beyond the landmark horizon $\tau$. As a result,
the effective information horizon is broader and less dependent on the
precise timing of early event accumulation. Consequently, the
ESS-optimal HR designs maintain relatively stable waiting-window
choices ($X \in \{0,3\}$) across all three hazard shapes, with the
optimizer primarily adjusting the locked-cohort size $N_W$ rather than
substantially altering the follow-up window itself. This distinction
highlights an important conceptual feature of the proposed framework:
the interaction between interim timing and information maturity depends
not only on the event-generation process, but also on the estimand
through which follow-up information is utilized.

\subsection{Sensitivity to Accrual Rate and True Survival}
\label{sec:1B}

We now hold the design fixed at the values calibrated under $k = 1$
in the previous section and probe its operating characteristics on a
$3 \times 5$ grid spanning
$\text{annual accrual} \in \{2.5, 5, 10\}$ patients per year and
$\theta_{\mathrm{truth}} \in \{0.62, 0.70, 0.75, 0.80, 0.85\}$. The
$\theta_{\mathrm{truth}} = 0.62$ column reports the empirical type I
error, and the $\theta_{\mathrm{truth}} = 0.80$ column reports the
power against the alternative used at calibration. The main text
focuses on the ESS-optimal designs, which is the natural primary
criterion when patient resources are the binding constraint and
the criterion we adopt as primary throughout. The
follow-up-optimal and minimax designs are deferred to
Web Appendix~C. 

\subsubsection{ESS-Optimal Landmark Design}

The ESS-optimal landmark design at $k = 1$ is
$(N_W, X, p_{\IA}, p_{\FA}, N_T) = (19,\, 3,\, 0.310,\, 0.0406,\, 45)$.
Table~\ref{tab:1B-lm-ess} consolidates its operating characteristics
across the rate-by-$\theta_{\mathrm{truth}}$ grid. Panel~A reports
the rejection probability, the only metric that depends on
$\theta_{\mathrm{truth}}$ in any practical sense; the
$\theta_{\mathrm{truth}} = 0.62$ column is the empirical type I
error and the $0.80$ column is the power against the design
alternative. Panel~B reports the calendar-timing and ethical-burden
metrics, all of which are invariant to $\theta_{\mathrm{truth}}$ to
within Monte Carlo noise --- $E[T_{\IA}]$ exactly so by the WCR
trigger, $E[T_{\FA}]$ to within $0.5$~months across the
$\theta_{\mathrm{truth}}$ values, and $\Bw$, $\Fw$ exactly so
because they are functions of accrual and $X$ only.

\begin{table}[ht]
\centering
\caption{Landmark ESS design at $k = 1$,
  $(N_W, X, N_T) = (19, 3, 45)$.  Panel~A: rejection probability.
  Panel~B: timing and ethical-burden metrics at
  $\theta_{\mathrm{truth}} = 0.80$.}
\label{tab:1B-lm-ess}
\small
\begin{tabular}{lccccc}
\toprule
\multicolumn{6}{l}{\textbf{Panel A.} $\Pr(\text{reject } H_0)$} \\[2pt]
\multirow{2}{*}{Annual accrual} & \multicolumn{5}{c}{$\theta_{\mathrm{truth}}$} \\
\cmidrule(lr){2-6}
 & 0.62 & 0.70 & 0.75 & 0.80 & 0.85 \\
\midrule
2.5 / yr  & 0.0447 & 0.2743 & 0.5478 & 0.8162 & 0.9540 \\
5.0 / yr  & 0.0432 & 0.2681 & 0.5404 & 0.8101 & 0.9475 \\
10.0 / yr & 0.0407 & 0.2543 & 0.5183 & 0.7790 & 0.9290 \\
\midrule
\multicolumn{6}{l}{\textbf{Panel B.} Timing and ethical burden} \\[2pt]
Annual accrual & $E[T_{\IA}]$ & $E[T_{\FA}]$ & $\Bw$ & $\Fw$ & \\
\midrule
2.5 / yr  &  89.19 & 234.81 & 0.62 & 0.67 & \\
5.0 / yr  &  46.09 & 129.39 & 1.24 & 1.04 & \\
10.0 / yr &  24.55 &  76.79 & 2.49 & 1.34 & \\
\bottomrule
\end{tabular}
\end{table}

\subsubsection{ESS-Optimal HR Design}

The ESS-optimal hazard-ratio design at $k = 1$ is
$(N_W, X, p_{\IA}, p_{\FA}, N_T) = (13,\, 3,\, 0.410,\, 0.0483,\, 25)$.
Table~\ref{tab:1B-hr-ess} reports the analogous panel structure.

\begin{table}[ht]
\centering
\caption{HR ESS design at $k = 1$,
  $(N_W, X, N_T) = (13, 3, 25)$. Conventions as in
  Table~\ref{tab:1B-lm-ess}.}
\label{tab:1B-hr-ess}
\small
\begin{tabular}{lccccc}
\toprule
\multicolumn{6}{l}{\textbf{Panel A.} $\Pr(\text{reject } H_0)$} \\[2pt]
\multirow{2}{*}{Annual accrual} & \multicolumn{5}{c}{$\theta_{\mathrm{truth}}$} \\
\cmidrule(lr){2-6}
 & 0.62 & 0.70 & 0.75 & 0.80 & 0.85 \\
\midrule
2.5 / yr  & 0.0489 & 0.3277 & 0.6436 & 0.8798 & 0.9716 \\
5.0 / yr  & 0.0482 & 0.2856 & 0.5644 & 0.8123 & 0.9445 \\
10.0 / yr & 0.0497 & 0.2549 & 0.4902 & 0.7345 & 0.8950 \\
\midrule
\multicolumn{6}{l}{\textbf{Panel B.} Timing and ethical burden} \\[2pt]
Annual accrual & $E[T_{\IA}]$ & $E[T_{\FA}]$ & $\Bw$ & $\Fw$ & \\
\midrule
2.5 / yr  & 60.63 & 139.44 & 0.62 & 0.67 & \\
5.0 / yr  & 31.82 &  81.79 & 1.25 & 1.04 & \\
10.0 / yr & 17.41 &  52.90 & 2.50 & 1.34 & \\
\bottomrule
\end{tabular}
\end{table}

This sensitivity analysis evaluates the robustness of the calibrated
WCR designs under joint operational and clinical uncertainty. The
accrual rate determines how calendar time translates into follow-up
maturity, whereas the true survival probability determines the event
generation process and therefore the rate at which inferential
information accumulates. Thus, the grid in Tables~\ref{tab:1B-lm-ess}
and~\ref{tab:1B-hr-ess} stress-tests the information--time alignment of
the fixed designs when the realized trial trajectory deviates from the
assumptions used at calibration.

The landmark design absorbs accrual perturbation relatively well. The
post-lock maturity window $X$ preserves a minimum amount of follow-up in
the locked cohort, so the information available for the landmark
estimand remains comparatively stable across accrual rates. At the
design alternative $\theta_{\mathrm{truth}} = 0.80$, power decreases
from $0.816$ at 2.5 patients per year to $0.779$ at 10 patients per
year, while type I error remains close to the nominal level across the
grid. The only practically meaningful departure is the fast-accrual
scenario, where power falls slightly below the 0.80 target. This suggests
that if accrual proceeds substantially faster than the design assumption, a
modest upward adjustment of $N_T$ or recalibration under the
anticipated accrual rate may be warranted.

The HR design is more sensitive to fast accrual. Although the HR
estimand uses unrestricted follow-up, the amount of unrestricted
follow-up available by the final analysis still depends strongly on the
calendar-time trajectory of enrollment. Faster accrual compresses the
interval over which follow-up can mature before analysis, thereby
reducing the effective information accumulated under the HR likelihood.
Consequently, at $\theta_{\mathrm{truth}} = 0.80$, power decreases from
$0.880$ at 2.5 patients per year to $0.735$ at 10 patients per year,
whereas type I error remains controlled across all three accrual rates.
This pattern illustrates an important distinction between estimand
structure and information availability: unrestricted follow-up can
increase efficiency when sufficient calendar time is available, but it
does not eliminate sensitivity to rapid enrollment when follow-up
maturity is compressed.

The ethical-burden metrics behave as operational quantities rather than
efficacy quantities. For both ESS-optimal designs, the waiting-window
length is fixed at $X=3$, so the expected number of patients enrolled
during the waiting window scales approximately as $rX$, yielding
$\Bw = 0.62$, $1.24$, and $2.49$ across the three accrual rates. The
mean per-patient waiting-window follow-up $\Fw$ is bounded by the $X/2$ heuristic in the absence of event censoring and truncation, reflecting the average exposure time for patients enrolled
uniformly within the waiting window before the interim decision. In the simulated rows with $X=3$, event censoring and finite waiting-window truncation yield empirical values below 1.5 months. These
metrics are essentially invariant to $\theta_{\mathrm{truth}}$ but
highly responsive to accrual intensity, reinforcing their interpretation
as measures of operational and ethical exposure induced by the interim
timing mechanism itself.



\subsection{Comparison with Existing Single-Arm Phase~II Designs}
\label{sec:comparison}

We compare WCR with four established single-arm Phase~II designs for
time-to-event or landmark endpoints: BayesDesign \citep{wu2021bayesian},
OneArm2stage/OSLRT \citep{wu2020twostage}, Simon's two-stage design
\citep{simon1989optimal}, and BOP2-TTE \citep{zhou2020bop2tte}.  All five
methods are evaluated under the single-Weibull/exponential calibration baseline used for the Section~5 comparison:
$\theta_0 = 0.62$, $\theta_{\mathrm{alt}} = 0.80$, an exponential
reference $S_0$ satisfying $S_0(24) = 0.62$, Poisson accrual at
5 patients per year, one-sided $\alpha = 0.05$, target power 0.80,
and $10{,}000$ simulation replicates.  BayesDesign, OSLRT, and
Simon's design use the sample size returned by their native
calibration and meet the joint error-rate and power target.  BOP2-TTE
is the exception: for an aligned secondary comparison, we force
$N_T$ and $n_1$ to the WCR-HR values and then examine the consequence
for type~I error control in the HR subsection.  The main comparison is
therefore on maximum sample size, expected sample size under $H_0$,
and analysis timing, not on whether a method can be calibrated at all. Accordingly, BOP2-TTE is interpreted as an aligned but uncalibrated sensitivity row rather than as a calibrated efficiency comparator; the daggered table notes flag this distinction wherever BOP2-TTE appears.

The two estimand frameworks --- landmark $\theta = S(\tau)$ and
hazard-ratio $\delta = \log S(\tau)/\log S_0(\tau)$ --- test the same
null hypothesis $S(\tau) \le 0.62$, but they use observed follow-up
differently. The landmark analysis caps each patient's contribution
at the horizon $\tau = 24$ months, so the test statistic depends only
on whether each patient reaches $\tau$. The hazard-ratio analysis
uses the full follow-up of every patient and exploits the assumed
proportional-hazards structure $S(t) = S_0(t)^{\delta}$. As a result,
the HR framework requires substantially smaller sample sizes than the
landmark framework but completes its final analysis at a later
calendar time relative to enrollment.

\subsubsection{Landmark framework}
\label{sec:p2-landmark}

Four methods test
$H_0: S(\tau) \le 0.62$ vs.\ $H_1: S(\tau) > 0.62$. WCR uses the
ESS-optimal landmark design from Section~\ref{sec:1B}
($N_W = 19$, $X = 3$, $p_{\IA} = 0.310$, $p_{\FA} = 0.0406$,
$N_T = 45$). BayesDesign uses an exponential failure model with an
inverse-gamma prior on the hazard and an event-driven trigger: the
interim is taken when $m_1$ landmark events have been observed and
the final when $m$ events have accrued, with the calibration
selecting the smallest final event count $m$. OneArm2stage / OSLRT
applies the one-sample log-rank score under the
\texttt{Two\_stage\_Optimal} (minimum expected sample size) row, with
\texttt{restricted = 1} and follow-up $t_f = \tau$. Simon's binomial
design operates on the binary indicator $\mathbf{1}\{T_i > \tau\}$
with $p_0 = 0.62$, $p_1 = 0.80$; both interim and final analyses
wait $\tau$ months after the relevant enrollment milestone before the
responder count is observed, so $T_{\IA} = a_{(n_1)} + \tau$ and
$T_{\FA} = a_{(n)} + \tau$. None of the comparators is forced to
match WCR's $N_T$ or ESS in the landmark framework; each is calibrated
under its package-native optimization criterion.

\begin{table}[ht]
\centering
\caption{Landmark-estimand comparison ($\theta_0 = 0.62$,
  $\theta_{\mathrm{alt}} = 0.80$, $r = 5/12$, exponential $S_0$,
  $10{,}000$ replicates).  Sample sizes and calendar times are
  mean\,(sd) under $H_0$.}
\label{tab:p2-landmark}
\footnotesize
\setlength{\tabcolsep}{4pt}
\resizebox{\textwidth}{!}{%
\begin{tabular}{lp{5.8cm}ccclll}
\toprule
Method & Design parameters & $\alpha$ & Power & Realised $N|H_0$ & $E[T_{\IA}|H_0]$ & $E[T_{\FA}|H_0]$ \\
\midrule
WCR & $N_W=19$, $X=3$, $N_T=45$, $p_{\IA}=0.310$, $p_{\FA}=0.0406$ & 0.0432 & 0.810 & 29.6 (12.0) & 46.1 (10.2) & 128.8 (15.8) \\
BayesDesign (Wu) & n1=21, n=42, m1=6, m=12, eta=0.95, xi=0.08 & 0.0471 & 0.816 & 25.4 (12.3) & 46.8 (15.3) & 98.7 (21.0) \\
Simon two-stage & n1=14, r1=9, n=46, r=33 & 0.0492 & 0.807 & 31.4 (10.7) & 57.5 (8.9) & 133.8 (16.5) \\
OSLRT (OneArm2stage) & n1=26, n=46, c1=0.136, c=1.617 & 0.0406 & 0.831 & 35.0 (10.0) & 62.3 (12.1) & 134.3 (16.2) \\
\bottomrule
\end{tabular}
}
\end{table}

\subsubsection{Hazard-ratio framework}
\label{sec:p2-hr}

Four methods test $H_0: \delta \ge 1$ vs.\ $H_1: \delta < 1$, with
$\delta_{\mathrm{alt}} = \log(0.80)/\log(0.62) \approx 0.467$. WCR
uses the ESS-optimal HR design from Section~\ref{sec:1B} ($N_W = 13$, $X = 3$,
$p_{\IA} = 0.410$, $p_{\FA} = 0.0483$, $N_T = 25$). BayesDesign uses
the same exponential--inverse-gamma model with \texttt{method = "hr"}
(no $\tau$-cap on follow-up); its native event-driven calibration
selects $n = 24$. OSLRT is calibrated separately with
\texttt{restricted = 0}; its log-rank score is intrinsically
HR-flavoured and the calibration selects $n = 26$. BOP2-TTE places a
Gamma prior on the hazard $\lambda$ and triggers the interim at $N/2$
enrollments, the final at $N$, with the decision rule
$\Pr(\lambda < \lambda_0 \mid \text{data}_k) <
 C \cdot (n_k / N)^{\gamma}$. To put BOP2-TTE on the same trial-size
footing as the other three HR comparators, we deliberately
constrain it to $N_T = 25$ and $n_1 = 13$, a departure from the
package's native search (which freely chooses $N_T$ to satisfy the
calibration target). With $N_T$ fixed, we search $(C, \gamma)$ on
a grid of $1{,}470$ cells ($C \in [0.30, 0.99]$ in steps of $0.01$,
$\gamma \in [0, 2]$ in steps of $0.1$). No cell satisfies both
$\alpha \le 0.05$ and power $\ge 0.80$. The empirical
(Type~I, power) frontier visible on this grid does not pass through
the feasible rectangle: at the $\alpha$-conservative end the cell
$(C, \gamma) = (0.95, 0)$ gives $\alpha = 0.034$ but only $0.36$
power, while on the power-feasible boundary --- the subset of grid
cells with power $\ge 0.80$ --- the smallest achievable $\alpha$ is
$0.186$. The reported design is the $(C, \gamma) = (0.87, 1.6)$
cell that minimises the deviation from the joint target, giving
$\alpha = 0.166$ and power $0.79$. This is an empirical limitation of
the aligned $N_T = 25$ implementation, not a theorem about BOP2-TTE:
a finer grid, a different prior, or a different reference survival
might shift the frontier. Under the grid scanned here, however, the
two BOP2-TTE tuning parameters $(C, \gamma)$ are insufficient to meet
the four target operating characteristics induced by early and final
decisions under $H_0$ and $H_1$. By contrast, WCR uses
$(N_W, X, p_{\IA}, p_{\FA})$, OSLRT uses $(n_1, n, c_1, c)$, and
BayesDesign lets the event-driven enrollment target vary with the
observed data. A reader who
prefers an $\alpha$-controlled BOP2-TTE design should let the
package select its maximum sample larger than 30 under the same exponential
reference, at the cost of dropping the trial-size alignment with
the other three HR comparators.

\begin{table}[ht]
\centering
\caption{Hazard-ratio comparison ($10{,}000$ replicates,
  exponential $S_0$).  Mean\,(sd) under $H_0$.}
\label{tab:p2-hr}
\footnotesize
\setlength{\tabcolsep}{4pt}
\resizebox{\textwidth}{!}{%
\begin{tabular}{lp{5.8cm}ccclll}
\toprule
Method & Design parameters & $\alpha$ & Power &  $E(N|H_0)$ & $E[T_{\IA}|H_0]$ & $E[T_{\FA}|H_0]$ \\
\midrule
WCR & $N_W=13$, $X=3$, $N_T=25$, $p_{\IA}=0.410$, $p_{\FA}=0.0483$ & 0.0482 & 0.812 & 19.5 (5.4) & 31.8 (8.3) & 81.5 (11.7) \\
BayesDesign (Wu) & n1=12, n=24, m1=6, m=12, eta=0.95, xi=0.07 & 0.0477 & 0.815 & 20.4 (6.9) & 40.5 (9.8) & 67.0 (10.3) \\
BOP2-TTE$^{\dagger}$ & N=25, $n_{\mathrm{int}}$=13, C=0.87, $\gamma$=1.6 & 0.1657 & 0.786 & 22.1 (5.1) & 31.1 (8.7) & 53.0 (16.6) \\
OSLRT (OneArm2stage) & n1=16, n=26, c1=-0.290, c=1.613 & 0.0430 & 0.824 & 22.2 (4.8) & 38.4 (9.6) & 86.5 (12.2) \\
\bottomrule
\end{tabular}
}
\smallskip

{\footnotesize $^{\dagger}$No $(C,\gamma)$ combination satisfies
$\alpha \le 0.05$ and power $\ge 0.80$ at the aligned $N_T = 25$;
reported values are the grid cell closest to the joint target.}
\end{table}

\subsubsection{Sensitivity under the RMS-2021 piecewise Weibull DGM}
\label{sec:p2-pw-robustness}

The four comparison designs above use an exponential reference
$S_0(t) = \exp(-\lambda_0 t)$ anchored at $S_0(\tau{=}24) = 0.62$.
The RMS-2021 ARST0531 VAC arm, however, is not well described by an
exponential survival curve.  A piecewise Weibull fit to the
reconstructed Kaplan--Meier curve gives an increasing hazard during
the first $24$ months (shape $1.89$, scale $35.5$), followed by a
near-flat post-induction hazard (shape $0.079$, scale $169.5$), with
cumulative-hazard continuity at the $24$-month break point. The
piecewise fit and the exponential reference agree only at $\tau$
($S_0(24) = S_{\mathrm{PW}}(24) = 0.62$); they differ markedly
elsewhere. Under the piecewise model, half of all patients survive
past $120$ months, compared with $9\%$ under the exponential
reference. To assess whether each calibrated design retains nominal
type I error when the truth is the piecewise Weibull rather than the
exponential reference used at calibration, we run $10{,}000$
replicates per method and estimand under the piecewise Weibull DGM,
holding the design parameters fixed.

\begin{table}[ht]
\centering
\caption{Robustness under the RMS-2021 piecewise Weibull DGM
  ($10{,}000$ replicates).  Both DGMs satisfy
  $S(\tau{=}24) = 0.62$; PW parameters are given in
  Section~\ref{sec:p2-pw-robustness}.}
\label{tab:p2-pw-dgm}
\footnotesize
\setlength{\tabcolsep}{4pt}
\begin{tabular}{lcccccc}
\toprule
 & \multicolumn{2}{c}{Exp DGM (calibration)} & \multicolumn{4}{c}{PW DGM (RMS-2021)} \\
\cmidrule(lr){2-3} \cmidrule(lr){4-7}
Method & $\alpha$ & Power & $\alpha$ & Power & $E[T_{\IA}|H_0]$ & $E[T_{\FA}|H_0]$ \\
\midrule
\multicolumn{7}{l}{\emph{Landmark estimand}} \\
WCR & 0.0447 & 0.800 & 0.0744 & 0.868 & 46.4 (10.3) & 129.1 (15.9) \\
BayesDesign (Wu) & 0.0471 & 0.816 & 0.0472 & 0.815 & 50.7 (15.3) & 102.7 (20.9) \\
OSLRT (OneArm2stage) & 0.0406 & 0.833 & 0.1866 & 0.967 & 62.6 (12.3) & 134.5 (16.4) \\
Simon two-stage & 0.0492 & 0.807 & 0.0510 & 0.802 & 57.7 (9.0) & 134.3 (16.3) \\
\midrule
\multicolumn{7}{l}{\emph{Hazard-ratio estimand}} \\
WCR & 0.0486 & 0.811 & 0.5889 & 0.963 & 31.9 (8.4) & 81.7 (12.0) \\
BayesDesign (Wu) & 0.0477 & 0.815 & 0.0478 & 0.816 & 49.2 (14.1) & 94.4 (18.1) \\
OSLRT (OneArm2stage) & 0.0433 & 0.828 & 0.9576 & 0.999 & 38.6 (9.7) & 86.9 (12.4) \\
BOP2-TTE$^{\dagger}$ & 0.1657 & 0.786 & 0.2775 & 0.872 & 31.1 (8.7) & 57.1 (13.9) \\
\bottomrule
\end{tabular}
\smallskip

{\footnotesize $^{\dagger}$BOP2-TTE failed calibration at the aligned
$N_T = 25$ (see Table~\ref{tab:p2-hr}); its sensitivity results
should be interpreted with this caveat.}
\end{table}

The results in Table~\ref{tab:p2-pw-dgm} separate mostly according to
whether the method's statistic is invariant to the chosen $S_0$.
BayesDesign and Simon control type I error to within Monte
Carlo noise of the calibration target under both DGMs. By contrast, the HR-WCR type~I error of 0.589 under the piecewise DGM is a severe reference-curve sensitivity signal, not a minor robustness deviation: HR-WCR should be calibrated with a correctly specified reference curve over the observed follow-up range, and the landmark WCR version is the safer recommendation for RMS-like long-tail hazards when such calibration is uncertain.
BayesDesign's invariance is by construction: the design is
event-driven, with interim and final analyses triggered by
pre-specified event counts (e.g.\ $m_1 = 6$, $m = 12$) rather
than by enrollment milestones.  Because the decision boundaries
depend only on the accumulated information (number of events),
not on the calendar time or enrollment pace, the operating
characteristics are determined entirely by the event-count
targets and are invariant to the choice of $S_0$. Simon's invariance
follows because the decision uses only the binary indicator
$\mathbf{1}\{T_i > \tau\}$, whose probability under either DGM is
$S_0(\tau) = 0.62$ by construction.

The remaining designs use an exponential reference internally and
are not invariant.  We discuss each estimand in turn.

For landmark estimand, WCR inflates from $\alpha = 0.045$ to $0.074$, a manageable
departure.  The piecewise model's increasing pre-$\tau$ hazard
(shape$_1 = 1.89$) concentrates events near $\tau$ rather than
spreading them uniformly, so more patients in the locked cohort
have events close to the landmark, slightly biasing the posterior
toward rejection.  OSLRT inflates more severely to
$\alpha = 0.187$: its score statistic
$(E - O)/\sqrt{E}$ centers at zero only when the expected event
count $E = \sum -\log S_0(\mathrm{fu}_i)$ is computed under the
true survival, and the exponential reference systematically
underestimates $E$ when pre-$\tau$ hazard is increasing.

For hazard-ratio estimand, all three non-invariant HR methods inflate substantially.  OSLRT
essentially always rejects ($\alpha = 0.958$), because the same
$E-O$ bias is amplified when follow-up extends into the
near-flat post-$\tau$ regime (shape$_2 = 0.079$) where the
exponential reference predicts far more events than actually occur.

WCR inflates from $\alpha = 0.049$ to $0.589$.
Table~\ref{tab:p2-pw-diag} shows the mechanism: under the
piecewise DGM the average event count at the final analysis drops
to $9.7$ (from $15.2$ under the exponential reference), because
patients surviving past $\tau$ rarely fail under the near-flat
post-induction hazard.  Meanwhile the transformed times
$W_i = \lambda_0 T_i$ accumulate to a mean sum of $19.5$ (versus
$16.7$), reflecting that observed survival times are systematically
longer than the exponential reference predicts.  The posterior mean
of $\delta$ is therefore pulled to $\approx 0.50$, far below the
null $\delta = 1$, which the decision rule reads as strong evidence
for treatment effect.

\begin{table}[ht]
\centering
\caption{Per-trial diagnostics for the WCR-HR design ($N_T = 25$)
  under the two DGMs.  Means over $10{,}000$ replicates reaching
  the final analysis.}
\label{tab:p2-pw-diag}
\small
\begin{tabular}{lcc}
\toprule
Quantity & Exponential DGM & Piecewise-Weibull DGM \\
\midrule
$\bar D$ (events at FA)               & $15.2$ & $9.7$  \\
$\overline{\sum W_i}$ (transformed time at FA) & $16.7$ & $19.5$ \\
Posterior mean $\hat\delta$           & $0.91$ & $0.50$ \\
Empirical $\alpha$                    & $0.049$ & $0.589$ \\
\bottomrule
\end{tabular}
\end{table}

BOP2-TTE inflates from $\alpha = 0.166$ to $0.278$ under the
piecewise Weibull DGM.  Because BOP2-TTE already failed calibration
at the aligned $N_T = 25$ (Table~\ref{tab:p2-hr}), this shift is
difficult to interpret: the misspecification compounds on top of an
uncalibrated baseline rather than starting from nominal $\alpha$
control.

The exponential working reference is therefore reasonably stable for
the landmark estimand but fragile for HR when the true post-$\tau$
hazard departs from a constant-hazard model.  For RMS-like long-tail
hazards, the practical recommendation is to use the landmark estimand,
as in the RMS2021 worked example in Section~4.  If an HR estimand is
required, WCR should be recalibrated with the planner's best
specification of $S_0$.  This recalibration is already supported by
the package: \texttt{two\_stage\_optimize\_design()} and
\texttt{oc\_two\_stage()} accept user-supplied \texttt{S0\_fun} and
\texttt{S0\_inv\_fun} arguments, and Section~4 walks through that
workflow with the piecewise Weibull fit to ARST0531.  The comparison
study holds a single exponential reference across all eight cells to
isolate the inferential machinery of each method from differences in
the assumed null model, not to recommend the exponential reference for
an RMS-shaped trial.

\subsubsection{Trial duration under the alternative hypothesis}
\label{sec:p2-tfa-h1}

The summary statistics in Tables~\ref{tab:p2-landmark} and
\ref{tab:p2-hr} report the mean and standard deviation of the final
analysis time $T_{\FA}$ under $H_0$.  For trial planning, the
distribution under $H_1$ is at least as important: a sponsor needs to
know how long the trial will take when the treatment works, not only
when it does not.  We therefore re-run each calibrated design with
simulated event times drawn under $S_{\mathrm{alt}}(\tau) = 0.80$ (or $\delta_{\mathrm{alt}} = 0.467$ in HR estimand) and report the
empirical distribution of $T_{\FA}$ conditional on the trial reaching
the final analysis. Each panel of
Figures~\ref{fig:tfa-h1-lm} and \ref{fig:tfa-h1-hr}
overlays the histogram with a red dashed line at the
\emph{design-time reference} for $T_{\FA}$: $n/r + t_f$ for the
enrollment-driven designs (WCR, OSLRT, Simon, BOP2-TTE) and
$n/r + \tau$ for Wu's event-driven design. Here $n$ is the
design-time maximum sample size, $r = 5/12$, and $t_f$ or $\tau$
is the post-enrollment follow-up duration.

All eight cells share a common Poisson arrival stream at $r = 5/12$,
so that calendar-time variability is comparable across methods.  This
choice deliberately differs from the conditional-Uniform assumption
used by OneArm2stage's \texttt{phase2.TTE()} for $\alpha$/power
calibration; conditioning on Uniform arrival times fixes the accrual
duration at $n/r$ and suppresses the timing variability that any real
trial inherits from the arrival process.  Wu's BayesDesign uses the
same Poisson arrival stream as WCR in this comparison, but its trigger
is event-driven: enrollment continues until $m$ events accrue rather
than stopping at $n$ patients.

\begin{figure}[ht]
\centering
\includegraphics[width=0.95\textwidth]{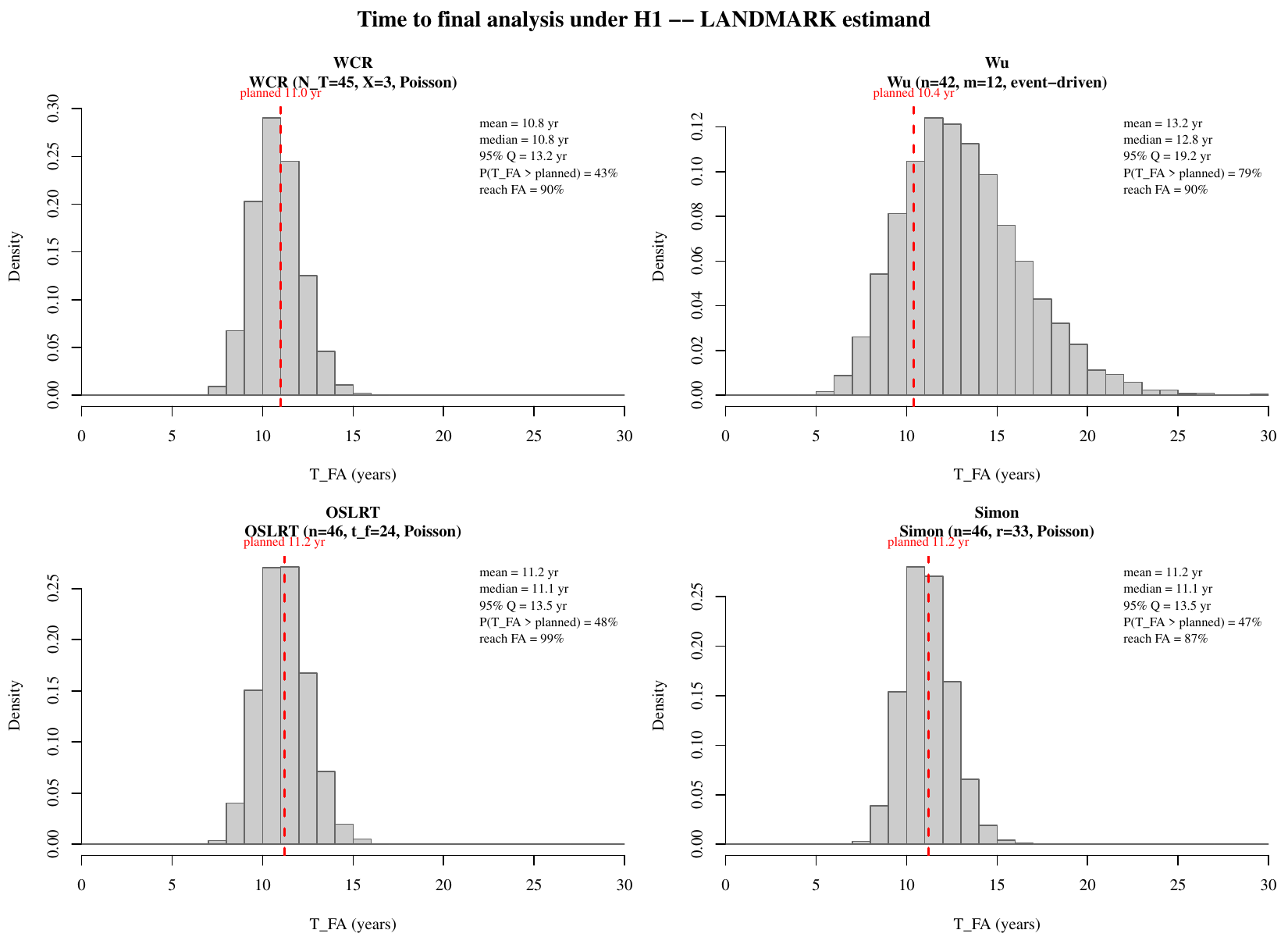}
\caption{Distribution of the final-analysis calendar time $T_{\FA}$
under $H_1$ for the four landmark designs, simulated under a common
Poisson accrual stream at $r = 5/12$ patients per month. Each panel
shows the histogram of $T_{\FA}$ across $10{,}000$ simulated trials,
conditional on reaching the final analysis (the percentage that
does so is reported in each panel). The red dashed line marks the
\emph{design-time reference} for $T_{\FA}$, defined as $n/r + t_f$
for the enrollment-driven designs (WCR, OSLRT, Simon) and as
$n/r + \tau$ for Wu's event-driven design. For the event-driven
case this reference is not a fixed calendar duration --- the trial
runs until $m$ events accrue, with no calendar cap unless the
simulator's $200$-patient ceiling is hit --- but is shown to
anchor the exceedance probability against a directly comparable
quantity.}
\label{fig:tfa-h1-lm}
\end{figure}

\begin{figure}[ht]
\centering
\includegraphics[width=0.95\textwidth]{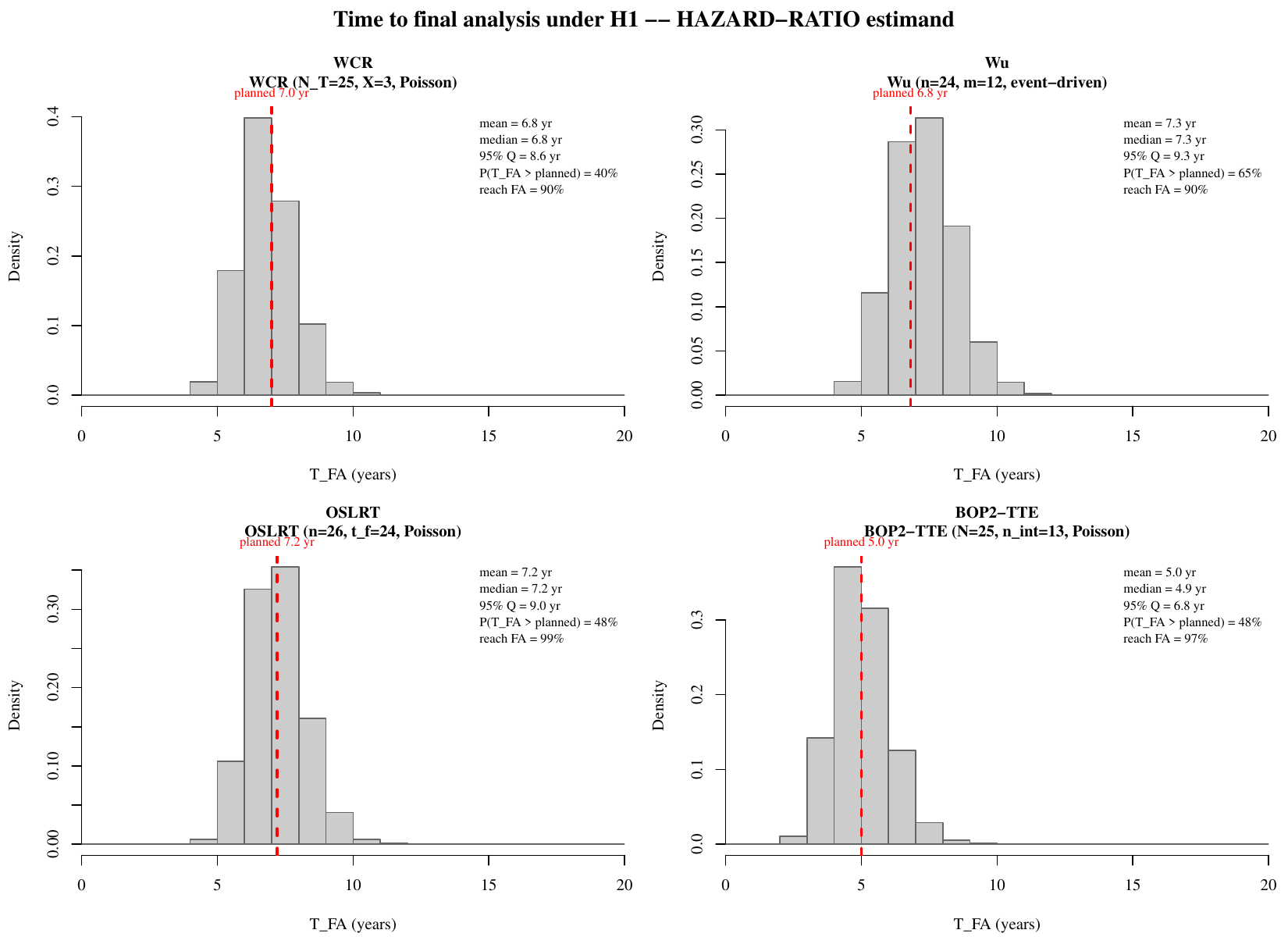}
\caption{Distribution of the final-analysis calendar time $T_{\FA}$
under $H_1$ for the four hazard-ratio designs. Same conventions
as Figure~\ref{fig:tfa-h1-lm}. BOP2-TTE here is the
$N_T = 25$ aligned calibration whose $\alpha = 0.166$ is not
controlled at the $0.05$ target; its shorter mean $T_{\FA}$ should
not be read as an efficiency advantage, since the comparison is
made between an $\alpha$-controlled design (WCR, OSLRT,
BayesDesign) and one that is not.}
\label{fig:tfa-h1-hr}
\end{figure}

The $T_{\FA}$ distributions fall into two patterns, determined by
what triggers the final analysis.  Enrollment-driven designs (WCR,
OSLRT, Simon, BOP2-TTE) close enrollment at a fixed $N_T$ and then
wait a deterministic follow-up period $t_f$, so the trial duration
is $T_{\FA} = a_{(N_T)} + t_f$, where $a_{(N_T)}$ is the arrival
time of the last patient.  Under Poisson accrual at rate $r$, the
only calendar-time variability comes from $a_{(N_T)}$, which has
standard deviation $\sqrt{N_T}/r \approx 12$--$16$ months for the
sample sizes in this comparison.  Event-driven designs (Wu's
BayesDesign) instead trigger the final analysis when $m$ events
have been observed, regardless of how many patients have been
enrolled.  Because the number of patients needed to accumulate $m$
events is itself random, the calendar time inherits an additional
source of variability that enrollment-driven designs do not have.

The enrollment-driven pattern is visible in all six cells where it
applies: the empirical mean $T_{\FA}$ falls within $0.2$ years of
the design-time reference $N_T/r + t_f$, and 45--48\% of trials
run slightly past it --- a modest and predictable spread driven
entirely by the Poisson arrival process.

Wu's BayesDesign departs from this pattern because the final
analysis requires $m$ events rather than $N_T$ enrollments.  When
the treatment is effective ($H_1$), fewer patients experience
events, so the trial must enroll beyond the design-time $n$ to
reach the event target.  Under the landmark estimand with
$S_{\mathrm{alt}}(\tau) = 0.80$, the per-patient event probability
is only $1 - 0.80 = 0.20$, so reaching $m = 12$ events requires on
average $12/0.20 = 60$ patients --- 43\% more than the design-time
$n = 42$.  This over-enrollment shifts the empirical mean $T_{\FA}$
to $13.2$ years (versus the design-time reference of $10.4$ years)
and produces a heavy right tail: the 95\% quantile reaches $19.2$
years and the worst observed trial ran for $29.5$ years.  Under
the HR estimand the same mechanism operates in milder form, because
the event probability under $H_1$ is higher in HR mode, yielding a
mean $T_{\FA}$ of $7.3$ years against a design-time reference of
$6.8$ years.

\begin{table}[ht]
\centering
\caption{Alternative-hypothesis final-analysis tail summaries for the event-driven comparator rows discussed in text. The reference column is the design-time duration used for the exceedance calculation.}
\label{tab:event-driven-tail}
\footnotesize
\begin{tabular}{lccccccc}
\toprule
Estimand & Reference & Mean & SD & Median & 95th pct. & Max & $\Pr(T_{\FA}>\text{ref.})$ \\
\midrule
Landmark BayesDesign & 10.4 yr & 13.2 yr & 3.5 yr & 12.6 yr & 19.2 yr & 29.5 yr & 0.76 \\
HR BayesDesign & 6.8 yr & 7.3 yr & 1.2 yr & 7.3 yr & 9.5 yr & 12.7 yr & 0.68 \\
\bottomrule
\end{tabular}
\end{table}

These are stress-test tails, not forecasts for an actual protocol: no
DSMB stop, sponsor cap, or calendar cutoff is modeled apart from the
simulator's 200-patient ceiling.  The point is that an event-driven
trigger carries a much wider $T_{\FA}$ distribution than an
enrollment-driven trigger at the same nominal $\alpha$ and power.  A
sponsor using an event-driven design should account for that width at
the protocol stage, either through an explicit maximum trial length, a
low-event contingency rule, or an enrollment-driven trigger.  Under
WCR-landmark, 43\% of $H_1$ trials run past the design-time reference
of 11 years, with the worst at 17 years.  Under Wu-landmark, 79\%
exceed the 10.4-year reference, with the long tail running into the
simulation cap.  A hard calendar window, whether contractual,
regulatory, or driven by a finite patient pool, is easier to defend
with the enrollment-driven profile.

\subsubsection{Summary}

Seven of the eight evaluated designs achieve type I error within
$\pm 0.01$ of the nominal $0.05$ target and power within
$\pm 0.02$ of $0.80$ on independent verification; the exception is
BOP2-TTE in the hazard-ratio framework, whose discrete spending-
function calibration cannot deliver $\alpha = 0.05$ at the
$N_T = 25$ trial size shared with the other three HR methods (see
the dedicated discussion in the HR-framework subsection). For the
other seven designs the substantive comparison reduces to how each
delivers those error rates: the maximum sample size $N_T$ it
requires, the expected sample size under $H_0$, and the calendar
timing of the interim and final analyses.

The standard deviations of realised sample size and analysis times
should be interpreted with care, since they reflect different sources
of variability across methods. Accrual is simulated as a Poisson
process at $5$ patients per year, so even strictly enrollment-driven
analyses inherit a calendar SD of approximately $\sqrt{N_W} / r$
months from the inter-arrival times alone. The landmark interim SDs
of $10.3$, $8.9$, and $12.1$ months for WCR, Simon, and OSLRT are
essentially these accrual-driven values; under deterministic accrual
they would all collapse toward zero.

The realised sample size $N$ is variable for every method, but for
different reasons. WCR, Simon, OSLRT, and BOP2-TTE follow a two-point
distribution: $N$ equals the interim-stage size $n_1$ when the trial
stops at the interim and the maximum size $n$ when it continues. For
these designs the SD is therefore bounded by the difference $n - n_1$
weighted by the early-stop probability. BayesDesign uses adaptive
enrollment instead --- it keeps recruiting patients until the target
$m$ events are observable --- so the realised $N$ varies continuously
trial-by-trial. Empirically the BayesDesign HR design carries a
realised $N$ of $20.4$ patients on average with SD $6.9$ and range
$[9, 43]$ across $10{,}000$ replicates, which is wider than WCR's
two-point spread of mean $19.5$ with SD $5.4$ over the support
$\{N_W, N_T\} = \{13, 25\}$.

This adaptive-enrollment mechanism is also what produces the
otherwise surprising HR result that BayesDesign's $T_{\FA}$ SD
($10.3$ months) is smaller than WCR's ($12.0$ months). Under WCR the
final analysis is anchored at $a_{(N_T)} + \tau$ and inherits the
full $\sqrt{N_T}/r \approx 12$-month accrual SD. Under BayesDesign,
slow accrual is automatically compensated by recruiting more patients,
which restores the cumulative exposure needed to observe the target
$m$ events; the negative feedback between accrual and trial duration
trims the calendar SD at the cost of a wider realised-$N$
distribution. The clearest contrast remains the landmark BayesDesign
comparator, whose $T_{\FA}$ SD of $21.0$ months exceeds the $\sim 16$
months shared by WCR, Simon, and OSLRT --- in the landmark setting
the adaptive-enrollment compensation is weaker because $m = 12$
events are needed and per-patient information is capped at $\tau$.

Under the landmark estimand, BayesDesign delivers the smallest mean
realised sample size ($25.4$ patients on average) by enrolling only
until the target $m = 12$ landmark events become observable. Its
expected final-analysis time of $98.7$ months is roughly thirty
months earlier than WCR's $128.8$ months, but its $T_{\FA}$ SD is
also $30\%$ larger ($21.0$ versus $15.8$ months) and its realised
$N$ varies continuously rather than between two values. WCR sits
between BayesDesign and the two enrollment-driven comparators on
sample-size mean: $29.6$ versus Simon's $31.4$ and OSLRT's $35.0$.
WCR also has the earliest mean interim ($46.1$ months versus
$62.3$ for OSLRT and $57.5$ for Simon). The landmark choice is
therefore a trade-off between the smaller mean realised $N$ and
earlier completion of BayesDesign on the one hand, and the bounded
realised $N$ and predictable timing of WCR on the other.

Under the hazard-ratio estimand, restricting attention to the three
designs that achieve $\alpha = 0.05$, WCR has the smallest mean
realised sample size ($19.5$ versus $20.4$ for BayesDesign and
$22.2$ for OSLRT) and the earliest interim analysis
($31.8$ months versus $40.5$ for BayesDesign and $38.4$ for OSLRT).
It does not dominate on every dimension. BayesDesign uses a slightly
smaller maximum sample size ($n = 24$ versus $N_T = 25$ for WCR) and
delivers a substantially earlier final analysis ($67.0$ versus
$81.7$ months) with a tighter $T_{\FA}$ distribution ($10.3$ versus
$12.0$ months SD), at the cost of a later interim look and a
continuous realised-$N$ distribution that ranges from $9$ to $43$
patients in our simulations. OSLRT achieves timing comparable to
WCR with $n = 26$. BOP2-TTE at the aligned $N_T = 25$ has
$\alpha = 0.166$ and is therefore not directly comparable on the
sample-size or timing axes; even setting the calibration gap aside,
its mean realised $N$ ($22.1$) and interim time ($31.1$ months) are
similar to WCR's, but the $\alpha$ inflation means any apparent
efficiency advantage is being purchased by an unacceptable type I
error rate.

The comparison does not produce a universal winner. BayesDesign
buys an earlier mean completion and a smaller mean realised $N$
through adaptive event-driven enrollment, at the cost of a
realised $N$ that varies trial-by-trial and a wider $T_{\FA}$ tail
under landmark. WCR buys a fixed enrollment budget and the
earliest interim under HR by pinning the locked cohort and the
waiting window. Among the four HR comparators, BOP2-TTE was the
only one that could not be brought to $\alpha = 0.05$ at the
aligned $N_T = 25$ on the $1{,}470$-cell $(C, \gamma)$ grid we
scanned; in this aligned grid, two tuning parameters were too coarse
for four operating-characteristic targets at this cohort size. The
choice between WCR and
BayesDesign in HR is the choice between a budget cap on
enrollment and a tighter completion-time distribution.

\section{Discussion}

The proposed Window-Cohort with Calibrated Follow-Up Requirement (WCR) framework was motivated by a structural limitation that arises repeatedly in modern single-arm time-to-event (TTE) trials, particularly in rare pediatric oncology settings with slow accrual, sparse events, and long-horizon endpoints. Existing interim monitoring paradigms typically rely on either event accumulation or enrollment progression to determine interim timing. Event-driven designs prioritize statistical information through observed events but often sacrifice calendar predictability, whereas enrollment-driven designs provide operational predictability but may trigger analyses based on immature follow-up. The central premise of this work is that these limitations are not merely operational inconveniences, but reflect a deeper conceptual issue regarding how information is represented and controlled in TTE interim monitoring.

A central contribution of the WCR framework is the recognition that existing paradigms typically control only proxies for information rather than information maturity itself. Event counts and enrollment counts are commonly treated as surrogates for inferential information, yet their relationship to actual information maturity becomes increasingly unstable in long-horizon and sparse-event settings. In particular, the same enrollment milestone may correspond to dramatically different levels of follow-up maturity depending on realized accrual dynamics, while the same event threshold may be reached only after prolonged calendar delays that render interim decisions operationally irrelevant. In this sense, traditional interim timing mechanisms implicitly treat timing as an emergent consequence of event realization or enrollment progression rather than as an explicit statistical design object.

The proposed framework instead treats follow-up maturity as a primary design parameter. Under the WCR construction, interim timing is jointly determined by a locked cohort size $N_W$ and a calibrated follow-up requirement $X$. This ``fix-then-wait'' structure directly controls the maturity of the interim dataset rather than relying on stochastic event realization or enrollment milestones alone. Conceptually, this introduces a third monitoring paradigm distinct from both event-driven and enrollment-driven designs. Unlike event-driven approaches, WCR does not wait for stochastic event accumulation to determine when decisions may occur. Unlike enrollment-driven approaches, WCR does not assume that enrollment alone determines the informativeness of interim data. Instead, follow-up maturity itself becomes the primary quantity being controlled.

More broadly, the current work suggests that follow-up maturity may represent the fundamental carrier of inferential information in TTE settings. Event counts and sample size become meaningful only insofar as they reflect the underlying maturity structure of follow-up. This distinction becomes particularly important in trials with long landmark horizons, delayed treatment effects, low event rates, or substantial administrative censoring. In such settings, the relationship between traditional proxies and actual inferential maturity may become highly unstable. From this perspective, the proposed framework can be viewed as a conceptual shift away from proxy-based interim timing toward direct maturity-based monitoring.

The framework also highlights that information maturity is inherently estimand-dependent. Under landmark survival estimands, effective information is naturally restricted to the clinically meaningful horizon $\tau$, whereas proportional hazards--based estimands induce unrestricted follow-up usage extending to the entire available follow-up window. Consequently, the same interim calendar time may correspond to fundamentally different effective information sets under different estimands. This observation suggests that interim timing cannot be separated from estimand specification. Rather, interim monitoring rules should be viewed as estimand-aligned inferential mechanisms rather than merely operational milestones. In this sense, the current work contributes to a broader ongoing transition toward estimand-aware clinical trial design and monitoring.

The implications of this distinction become particularly pronounced in rare pediatric oncology. In many contemporary pediatric solid-tumor trials, accrual may proceed slowly while clinically meaningful endpoints require one, two, or even several years of follow-up. Under such conditions, event-driven monitoring may become operationally disconnected from actionable decision-making because sufficient events may not accumulate until after a large fraction of patients have already been enrolled. Conversely, enrollment-driven monitoring may generate analyses dominated by immature follow-up and heavy administrative censoring. The resulting mismatch between operational timing and inferential maturity is not incidental but structural. The WCR framework attempts to address this mismatch by jointly controlling the timing and maturity of interim analyses through explicit follow-up parameterization.

An additional contribution of the proposed framework is the explicit characterization of ethical burden associated with interim timing mechanisms. In conventional monitoring paradigms, prolonged waiting periods prior to interim decisions are often treated as unavoidable operational consequences. However, under sparse-event settings, these waiting periods may correspond to substantial additional patient exposure before a futility decision can be made. The WCR framework formalizes this phenomenon through quantities such as the decision-lag enrollment burden and cumulative follow-up exposure during the waiting window. This perspective reframes ethical burden as a structural property of interim timing mechanisms rather than as an incidental consequence of trial conduct. By parameterizing the waiting window directly, the framework allows trade-offs among inferential maturity, operational feasibility, and patient exposure to become explicit and quantifiable components of trial design.

Importantly, the primary contribution of the current work is not inherently Bayesian. Although Bayesian posterior monitoring provides a convenient inferential engine within the current implementation, the conceptual contribution of WCR lies in the explicit parameterization of follow-up maturity and the joint alignment of information and calendar time. The framework is therefore compatible with alternative inferential paradigms, including likelihood-based approaches, frequentist monitoring procedures, hybrid Bayesian--frequentist methods, and decision-theoretic adaptive designs. In this sense, WCR should be interpreted primarily as an information-time framework rather than as a specific Bayesian monitoring design.

The current work also raises several broader theoretical questions. One direction involves formal characterization of effective information under administrative follow-up constraints. Another concerns asymptotic relationships between event-driven and maturity-driven monitoring regimes under sparse-event processes. A third involves optimality theory balancing inferential maturity, operational efficiency, and ethical burden under constrained accrual environments. More generally, the framework raises foundational questions regarding what constitutes information in TTE interim monitoring and how information should be coupled to calendar time in modern adaptive trial settings.

Another potentially important implication of the proposed framework
concerns the operational use of early follow-up horizons in
long-horizon TTE trials. In some clinical settings, investigators may
believe that an earlier milestone, such as 6-month PFS, is clinically
or biologically informative for a later primary horizon, such as
24-month PFS. Conventional approaches to incorporating such early
information often require explicit cross-horizon modeling, including
specification of the joint distribution or correlation structure
between short-term and long-term binary indicators. In contrast, the
WCR framework can operationally leverage early follow-up horizons
without redefining the primary estimand or requiring explicit
cross-milestone correlation modeling. For example, setting the waiting
window to $X=6$ months ensures that the locked cohort has accrued a
minimum amount of short-term follow-up prior to interim analysis,
while inference and decision-making may still target the primary
landmark estimand $\theta=S(\tau)$.

Importantly, WCR does not treat the earlier horizon as a formally
validated surrogate endpoint replacing the primary endpoint. Rather,
the framework treats early follow-up as partial time-to-event
information whose contribution is calibrated directly through the
operating characteristics of the monitoring procedure. In this sense,
the proposed approach may reduce implementation complexity and avoid
potential misspecification associated with explicit correlation-based
frameworks linking quantities such as $I(T>6)$ and $I(T>\tau)$.
However, calibration remains dependent on the assumed event-time
distribution and the estimand-specific information structure used by
the design. More formal comparisons between WCR and correlation-based
short-term monitoring approaches represent an important direction for
future methodological investigation.

Several limitations should also be acknowledged. First, the current implementation focuses primarily on single-arm phase II settings. Extension to randomized trials, multi-arm platforms, and adaptive enrichment settings remains an important area for future work.

Second, while the proposed framework explicitly controls minimum follow-up maturity, the relationship between maturity and inferential precision may still depend on the underlying event process and estimand structure. The sensitivity analyses show that this dependence is especially important for HR-based unrestricted follow-up, where reference-curve misspecification can severely distort type~I error.

Third, the current calibration strategy relies primarily on simulation-based optimization; additional theoretical characterization of optimality properties may further strengthen the framework. Protocol-specific applications should therefore report calibration seeds, simulation error, and sensitivity analyses under plausible alternative baseline curves and accrual rates.

Fourth, the proposed approach assumes that follow-up maturity can be meaningfully parameterized through a single waiting-window quantity $X$, whereas more complex dynamic or patient-specific maturity structures may arise in practice. The main operating-characteristic simulations also do not include independent dropout or informative censoring; these mechanisms should be added when they are plausible for the target disease setting.

At a broader level, the proposed framework reflects an evolving view of interim monitoring in modern clinical trials. Historically, interim timing was often inherited from event processes or enrollment progression and treated primarily as an operational scheduling issue. However, as trials increasingly incorporate long-horizon endpoints, sparse-event populations, adaptive monitoring, and estimand-driven inference, the coupling between information and time becomes progressively more central to statistical design itself. In this context, interim timing may no longer be viewed merely as a logistical consequence of accrual or event accumulation. Instead, interim timing may need to be treated as an estimand-aligned statistical design object that jointly governs information maturity, operational feasibility, and ethical burden.

More fundamentally, the current work suggests that modern TTE interim monitoring may benefit from moving beyond traditional proxy-based paradigms toward explicit maturity-based frameworks. The WCR design represents one possible step in this direction by directly parameterizing follow-up maturity and integrating information and time within a unified design structure. We hope that this perspective contributes to a broader rethinking of how interim monitoring should be conceptualized in rare-disease and long-horizon clinical trials.

\clearpage
\bibliographystyle{apalike}
\bibliography{references}

\end{document}